\setlist[enumerate]{leftmargin=.5in}
\setlist[itemize]{leftmargin=.5in}
\title{On a class of constrained Bayesian filters and their numerical implementation in high-dimensional state-space models\thanks{Submitted to the editors on ...
\funding{This work has been partially supported by grant PID2024-158181NB-I00 NISA, funded by MCIN/AEI/10.13039/501100011033 and ERDF, and by Community of Madrid, under Grant IDEA-CM (TEC-
2024/COM-89).}}}
\author{Utku Erdogan\thanks{Eskisehir Technical University (\email{utkuerdog@gmail.com})}
\and Gabriel J. Lord\thanks{Radboud University (\email{gabriel.lord@ru.nl})}
\and Joaquin Miguez\thanks{Universidad Carlos III de Madrid (\email{joaquin.miguez@uc3m.es})}}
\newcommand{\QED}{\Box} 
\newcommand{\Real}{\mathbb{R}}  
\newcommand{\Ind}{\mathbbm{1}}
\newcommand\dtv[1]{\left\| #1 \right\|_{\rm tv}}
\newcommand{\mB}{{\mathcal B}}
\newcommand{\mP}{{\mathcal P}}
\newcommand{\mC}{{\mathcal C}}
\newcommand{\mS}{{\mathcal S}}
\newcommand{\mH}{{\mathcal H}}
\newcommand{\mF}{{\mathcal F}}
\newcommand{\mX}{{\mathcal X}}
\newcommand{\mY}{{\mathcal Y}}
\newcommand{\mN}{{\mathcal N}}
\newcommand{\mbN}{\mathbb{N}}
\newcommand{\mbP}{\mathbb{P}}   
\newcommand{\mbS}{\mathbb{S}} 
\newcommand{\mbR}{\mathbb{R}} 
\newcommand{\mbD}{\mathbb{D}} 
\newcommand{\sd}{{\sf d}}
\newcommand{\X}{\mathtt{X}}
\newcommand{\W}{\mathtt{W}}
\newcommand{\beq}{\begin{equation}}
\newcommand{\eeq}{\end{equation}}
\newcommand{\beqa}{\begin{eqnarray}}
\newcommand{\eeqa}{\end{eqnarray}}
\newcommand{\nn}{\nonumber}
\newcommand{\dy}{d_y}
\newcommand{\dx}{d_x}
\newcommand{\cred}{\textcolor{red}}
\newcommand{\UE}[1]{{\cred{UE: #1}}\xspace}
\newcommand{\pl}{\varrho_{l}}
\begin{document}

\maketitle

\begin{abstract}
Bayesian filtering is a key tool in many problems that involve the online processing of data, including data assimilation, optimal control, nonlinear tracking and others. Unfortunately, the implementation of filters for nonlinear, possibly high-dimensional, dynamical systems is far from straightforward, as computational methods have to meet a delicate trade-off involving stability, accuracy and computational cost. In this paper we investigate the design, and theoretical features, of constrained Bayesian filters for state space models. The constraint on the filter is given by a sequence of compact subsets of the state space which determine the sources and targets of the Markov transition kernels in the dynamical model. Subject to such constraints, we provide sufficient conditions for filter stability, as well as approximation error rates with respect to the original (unconstrained) Bayesian filter. Then, we look specifically into the implementation of constrained filters in a continuous-discrete setting where the state of the system is a continuous-time stochastic It\^o process but data are collected sequentially over a time grid. We propose an implementation of the constraint that relies on a data-driven modification of the drift of the It\^o process using barrier functions, and discuss the relation of this scheme with methods based on the Doob $h$-transform. Finally, we illustrate the theoretical results and the performance of the proposed methods in computer experiments for a partially-observed stochastic Lorenz 96 model.
\end{abstract}

\begin{keywords}
Bayesian filters; filter stability; approximation errors; continuous-discrete filtering; Doob $h$-transform; constrained SDEs.
\end{keywords}

\begin{MSCcodes}
62M20, 60G35, 60H10, 65C05
\end{MSCcodes}

%
\section{Introduction}

%
\subsection{Motivation and background}

Bayesian filtering is a key tool in many problems that involve the online processing of data, including data assimilation \cite{Leeuwen09, Law15, Reich15, Fearnhead18}, optimal control \cite{Bensoussan92,Mitter02,Georgiou13}, nonlinear tracking \cite{Arulampalam02b,Ristic04,Stone13}, etc. The filtering problem is to compute the probability distribution of the state of a dynamical system conditional on available observations. The latter are collected over time, hence the distribution of the state has to be updated when new data becomes available \cite{Anderson79,Bain08}.  

The exact implementation of Bayesian filters is rarely feasible beyond the Kalman filter for linear and Gaussian models \cite{Kalman60,Kalman61,Anderson79,Ristic04}. In general, approximations are needed, which include Kalman-based methods \cite{Anderson79,Evensen03,Julier04}, recursive Monte Carlo algorithms \cite{Gordon93,Doucet00,Evensen03,Djuric03}, variational schemes \cite{Smidl08,Frazier23} and combinations of them. All these methods suffer from limitations which are related to a complex trade-off involving stability, accuracy and computational cost. For example, Kalman-based methods become very inaccurate when the system of interest displays strong nonlinearities or is subject to heavy-tailed perturbations \cite{Ristic04}, and particle filters (PFs) \cite{Gordon93,Kitagawa96,Doucet00,Djuric03} usually struggle when the observations are highly-informative or the model dimension is high \cite{Bengtsson08,Snyder08}.

One approach to the implementation of robust-yet-accurate filters is the imposition of constraints on the dynamics of the system of interest. The key idea underlying this strategy is to restrict the exploration of the state space to typical solutions, while excluding others which are feasible but very unlikely. In this way, one expects to obtain algorithms which are numerically stable and yield good accuracy with high probability. In a broad sense, auxiliary \cite{Pitt01,Douc09b,Branchini21} and twisted \cite{Whiteley14,Ala16} PFs can be interpreted as constrained algorithms, as they adapt the sampling kernel using the likelihood (auxiliary PFs) or optimize it to minimize the variance of certain estimators (twisted PFs). Another way to constrain the filter dynamics is to truncate the support of the system state, a solution that has been explored methodologically for Kalman filters \cite{Garcia12,Amor17} and also exploited to analyze filter stability \cite{Crisan20} and the convergence of PFs \cite{Heine08}.

In this work, we are specifically interested in continuous-discrete filtering problems, where the state of the system of interest is a continuous-time stochastic process $\X(t)$ characterized by an It\^o stochastic differential equation (SDE), and the observations $Y_1, Y_2, \ldots, Y_n, \ldots$  are collected sequentially at specific time instants $t_1, t_2, \ldots, t_n, \ldots $ \cite{Akyildiz24}. In this setting, any constraint on the system dynamics needs to be translated into a modification of the underlying SDE. Such modifications are not straightforward. Related methods include {\em guided} SDEs, where an additional drift term is constructed using Doob's $h$-transform \cite{Chopin23,Pieper-Sethmacher25,Pieper-Sethmacher25b,Pieper-Sethmacher25c,Eklund25}. This transform, denoted $h(x,t)$, is generally intractable, however, and it is common to resort to linearizations of the original SDE to approximate $h(x,t)$. The methods in \cite{Pieper-Sethmacher25,Pieper-Sethmacher25b,Pieper-Sethmacher25c,Eklund25} follow this approach. Moreover, they are developed for infinite-dimensional models and they enable sampling from regions of high probability given the observed data. The authors of \cite{Chopin23}, on the other hand, avoid linearizations and propose to approximate $h(x,t)$ using neural networks, which requires abundance of data and an offline training procedure. In any case, these techniques are computationally costly and remain subject to approximation errors.

%
\subsection{Contributions}

We investigate the problem of filtering for Markov state-space models (SSMs) \cite{Sarkka13} under constraints. In a Markov SSM, the state dynamics are described by a sequence of Markov transition kernels $K_1, K_2, \ldots, K_n, \ldots$ and we impose a constraint which is characterized by a sequence of compact subsets $\mC_0, \mC_1, \ldots, \mC_n, \ldots$ of the state space $\mX$. These sets specify the sources and targets of the kernels $K_n$, hence they yield a specific truncation of the state space $\mX$ for each $n$. In this setup, we look into the stability, approximation errors and numerical implementation of the resulting constrained Bayesian filters. 

We first prove that, under mild assumptions on the original (unconstrained) SSM, the constrained filter is exponentially stable. Stability is important because it enables the design of numerical filtering algorithms that do not accumulate errors over time \cite{DelMoral01c}. To analyze the approximation error rates of constrained filters, we impose some regularity assumptions both on the SSM and the constraint:
\begin{itemize}
\item[(a)] We parametrize the constraint by an index $l = 1, 2, \ldots$, in such a way that $\lim_{l\to\infty} \mC_n^l = \mX$ for every $n$. Intuitively, this means that the constraints on the state space are relaxed as $l$ increases, and we recover the original (unconstrained) model when $l\to\infty$.
\item[(b)] We assume that the compact subsets $\mC_0^l, \mC_1^l, \ldots, \mC_n^l, \ldots$ are sufficiently connected, in the sense that the probability mass $K_n(x,\mC_n^l)$, allocated to $\mC_n^l$ by the kernel $K_n$ with initial point $x' \in \mC_{n-1}^l$, is bounded away from zero by some constant $\varepsilon_n^l>0$ that depends on $l$. 
\end{itemize}
Under these assumptions, we find approximation error rates for the constrained filters with respect to $\varepsilon_n^l$. Moreover, when the positive bounds in assumption (b) above hold uniformly over time, i.e., $\varepsilon_n^l \ge \varepsilon^l >0$ for all $n$, the approximation rates can also be made uniform over $n$. Remarkably, this analysis suggests practical guidelines for the selection of the compact sets $\mC_0^l, \mC_1^l, \ldots, \mC_n^l, \ldots $ as superlevel sets of the likelihood functions in the unconstrained SSM.

In the continuous-discrete filtering setup, the observations are collected at times $t_1, t_2, \ldots, t_n, \ldots $, and the Markov kernels $K_n$ are determined by the SDE that characterizes the continuous-time state process $\X(t)$ for $t \in [t_{n-1}, t_n]$. We propose a simple (albeit possibly coarse) approximation of the constrained Markov kernels using barrier functions \cite{Margellos11,Clark21}. These functions can be efficiently computed online and they are used to modify the drift of the SDE to impose the constraint on the state space. As a result, we obtain a class of modified SDEs that can be easily simulated and built into sampling-based algorithms such as PFs and ensemble Kalman filters (EnKFs). We show some numerical results that illustrate the stability and accuracy of the proposed constrained filters on a stochastic Lorenz 96 model \cite{Grudzien20}

%
\subsection{Organization of the paper}

In Section \ref{sBackground} we introduce notation and provide some background on filter stability and constrained SSMs. Section \ref{sConstrained} is devoted to the analysis of constrained Bayesian filters, including their stability and approximation rates. 
We note that Section \ref{sConstrained} is essentially theoretical. Readers interested only in methodological aspects may skip directly to Section \ref{sSDEs}, where we describe the continuous-discrete filtering setup, discuss Doob's $h$-transform for guided SDEs, and then introduce the proposed barrier-constrained SDEs. In Section \ref{sNumerics} we present numerical results for selected constrained Bayesian filters (using the barrier approach) and the stochastic Lorenz 96 model. Finally, Section \ref{sConclusions} is devoted to the conclusions.

%
\section{Background} \label{sBackground}

%
\subsection{State space models and Bayesian filtering} \label{ssIntroModels}

Let us construct a sequence of $\dx$-dimensional real random vectors $\{X_n\}_{n \ge 0}$ on the probability space $(\Omega,\mF,\mbP)$. The law of $X_0$ is denoted $\pi_0$ and the (random) dynamics of $X_n$ are determined by a sequence of Markov kernels 
\beq 
K_n : \mX \times \mB(\mX) \mapsto [0,1], \quad n >0,
\eeq  
where $\mX \subseteq \Real^{\dx}$ is the state space and $\mB(\mX)$ is the Borel $\sigma$-algebra of subsets of $\mX$. We also construct a sequence of uniformly bounded potential functions 
\beq
g_n : \mX \mapsto (0,\|g\|_\infty],
\eeq
where $\|g\|_\infty = \sup_{x\in\mX, n\in\mbN} |g_n(x)| < \infty$. 

\begin{definition}
If  $K=\{ K_n \}_{n \ge 1}$ is the sequence of Markov kernels and $g=\{g_n\}_{n\ge 1}$ is the sequence of uniformly bounded potentials, then we denote the system of interest as $\mS:=(\pi_0, K, g)$ and we refer to $\mS$ as a Markov {\em state space model} (SSM). 
\end{definition}

Let $\mP(\mX)$ denote the family of probability measures on $(\mX,\mB(\mX))$ and, for any real function $f : \mX \mapsto \Real$ and $\mu\in\mP(\mX)$, let $\mu(f)$ be shorthand for $\int_\mX f\sd\mu$. We introduce the sequence of operators
$
\Phi_n : \mP(\mX) \mapsto \mP(\mX)
$
such that
\beq
\Phi_n(\mu)(f) := \frac{
	K_n\mu(fg_n)
}{
	K_n\mu(g_n)
},
\label{eqPUop}
\eeq
where $K_n\mu(\sd z) := \int_{\mX} K_n(x,\sd z)\mu(\sd x)$. 
If we let $\pi_0$ be the probability law of the initial state $X_0$, then the operators $\{\Phi_n\}_{n\ge 1}$ yield the sequence of probability distributions
\beq
\pi_n := \Phi_n(\pi_{n-1}), \quad n\ge 1,
\eeq
generated by the model $\mS$. The composition of consecutive operators is denoted
\beq 
\Phi_{m:n} = \Phi_n \circ \Phi_{n-1} \circ \cdots \circ \Phi_{m+1},
\nn
\eeq
for $m\le n$, in such a way that $\pi_n = \Phi_{m:n}(\pi_m)$. We adopt the convention $\Phi_{n:n}(\pi)=\pi$. Also note that $\Phi_{n-1:n} = \Phi_n$.

In Bayesian filtering applications, the state sequence $\{X_n\}_{n\ge 0}$ is related to an observation sequence $\{Y_n\}_{n\ge 1}$, where $Y_n \in \mY \subseteq \mbR^{d_y}$, and $\mY$ is the $d_y$-dimensional observation space. In this setting, the potential $g_n(x)$ is the likelihood of the state $X_n=x$ given the observation $Y_n=y_n$, and $\pi_n$ is the conditional probability law of $X_n$ given $Y_1=y_1, \ldots, Y_n=y_n$, often referred to as the optimal (Bayesian) filter at time $n$. Also, the measure
\beq
\xi_n(\sd x) := K_n\pi_{n-1}(\sd x) = \int K_n(x',\sd x)\pi_{n-1}(\sd x')
\label{eq1ahead}
\eeq
is the conditional probability law of $X_n$ given $Y_1=y_1, \ldots, Y_{n-1}=y_{n-1}$, often referred to as the (one step ahead) predictive distribution of $X_n$. Bayes' theorem readily yields the relationship 
\beq
\pi_n(\sd x) = \frac{g_n(x) \xi_n(\sd x)}{\xi_n(g_n)}.
\label{eqBayesUpdate}
\eeq
Hereafter, we refer to $\Phi_n$ as the prediction-update (PU) operator, as it comprises the one-step-ahead prediction of \eqref{eq1ahead} and the Bayesian update of \eqref{eqBayesUpdate}. We also employ the term likelihood for $g_n$, rather than potential, to emphasize the relationship with the observations in the filtering problem.

%
\subsection{A topology on the set of state space models}

Let $\mbS$ be the family of SSMs of the form $\mS=( \pi_0, K, g )$ for which the sequence $\{ \pi_n \}_{n\ge1}$ is well defined (in particular, we assume that $\xi_n(g_n) > 0$, for all $n \ge 1$, in Eq.~\eqref{eqBayesUpdate}).

For any two measures $\alpha,\beta\in\mP(\mX)$, let
\beq
\dtv{\alpha-\beta} := \sup_A |\alpha(A)-\beta(A)|,
\nn
\eeq
where the supremum is taken over all measurable subsets of $\mX$, denote the total variation norm of the difference $\alpha-\beta$. Following \cite{Crisan20}, we impose a topology $\mbD$ on the set of SSMs $\mbS$. To be specific, take some $\mS \in \mbS$ and let $\mS^l=(\pi_0^l,K^l,g^l)$, $l \ge 1$, be a sequence in $\mbS$, with $K^l=\{K_n^l\}_{n\ge 1}$ and $g^l=\{g_n^l\}_{l\ge 1}$. $\mS^l$ converges to $\mS$, and we denote 
$
\lim_{l\to\infty} \mS^l=\mS, 
$
when
\beqa
\lim_{l\to\infty} \dtv{\pi_0-\pi_0^l} &=& 0, \nn\\
\lim_{l\to\infty} \dtv{K_n(x,\cdot)-K_n^l(x,\cdot)} &=& 0 ~~\text{for every $n \ge 1$, and}\nn \\
\lim_{l\to\infty} g_n^l(x) &=& g_n(x) ~~\text{for every $n\ge 1$ and $x\in\mX$.}\nn
\eeqa

Let $\{\Phi_n^l\}_{n\ge 1}$ denote the PU operators generated by model $\mS^l$, which in turn yield the sequence of measures $\pi_n^l = \Phi_{0:n}(\pi_0^l)$. The topology $\mbD$ has the property that convergence of the sequence of models $\mS^l \to \mS$ implies convergence of the probability laws $\pi_n^l \to \pi_n$ in total variation. Specifically, we have the following lemma taken from \cite{Crisan20}.

\begin{lemma} \label{lmMarginals} 
Let $\mS^l=(\pi_0^l,K^l,g^l)$, $l \ge 0$, and $\mS=(\pi_0,K,g)$ be elements of $\mbS$ with corresponding PU operators $\Phi_n^l$ and $\Phi_n$, $n \ge 1$, respectively. If $\lim_{l \to \infty} \mS^l=\mS$ in $\mbD$, then $\lim_{l \to \infty} \dtv{\Phi_{0:n}^l(\pi_0^l) - \Phi_{0:n}(\pi_0)}=0$.    
\end{lemma}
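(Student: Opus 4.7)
The plan is to argue by induction on $n$. The base case $n=0$ is the hypothesis $\dtv{\pi_0^l - \pi_0} \to 0$. For the inductive step, suppose $\dtv{\pi_{n-1}^l - \pi_{n-1}} \to 0$ and fix any measurable $f:\mX\to\mbR$ with $\|f\|_\infty\le 1$. Writing $\pi_n^l(f) = N^l/D^l$ and $\pi_n(f) = N/D$ with $N^l = K_n^l\pi_{n-1}^l(fg_n^l)$ and $D^l = K_n^l\pi_{n-1}^l(g_n^l)$, the algebraic identity $N^l/D^l - N/D = (N^l-N)/D^l - (N/D)(D^l-D)/D^l$ reduces the problem to showing that $|N^l - N|$ and $|D^l - D|$ vanish, with bounds uniform in $f$. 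The definition of $\mS\in\mbS$ guarantees $D = K_n\pi_{n-1}(g_n) > 0$, so $D^l$ is bounded away from zero for $l$ large, making the division valid in the limit.

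To control $N^l - N$ uniformly in $f$, I would split the three sources of perturbation (potentials, kernels, past marginal) via a triangle decomposition
\[
N^l - N = A_l + B_l + C_l,
\]
with $A_l = K_n^l\pi_{n-1}^l(f(g_n^l-g_n))$, $B_l = (K_n^l-K_n)\pi_{n-1}^l(fg_n)$, and $C_l = K_n(\pi_{n-1}^l - \pi_{n-1})(fg_n)$. The term $C_l$ is immediate: $|C_l|\le \|g_n\|_\infty \dtv{\pi_{n-1}^l - \pi_{n-1}}\to 0$ since Markov kernels are total-variation contractions and by inductive hypothesis. For $B_l$, bound $|B_l|\le \|g_n\|_\infty \int \dtv{K_n^l(x,\cdot)-K_n(x,\cdot)}\,\pi_{n-1}^l(\sd x)$ and handle the $l$-dependent integrating measure by the splitting trick $\pi_{n-1}^l = \pi_{n-1} + (\pi_{n-1}^l - \pi_{n-1})$: the first piece becomes an integral against a fixed measure where dominated convergence applies (since $\dtv{\cdot}\le 2$ dominates pointwise-vanishing integrands), while the second is bounded by $2\|g_n\|_\infty\dtv{\pi_{n-1}^l-\pi_{n-1}}$ and vanishes by hypothesis.

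The term $A_l$ is the delicate one and I expect it to be the main obstacle: it requires translating the pointwise convergence $g_n^l(x)\to g_n(x)$ into an integral estimate against the $l$-dependent measure $K_n^l\pi_{n-1}^l$, which demands a uniform-in-$l$ bound on $|g_n^l|$. Assuming such a bound is available (either implicit in the definition of $\mbD$-convergence via a common $\|g\|_\infty$ or imposed additionally as in \cite{Crisan20}), the same splitting trick reduces $A_l$ to an integral against the fixed measure $K_n\pi_{n-1}$ plus a TV-remainder, and dominated convergence again yields $|A_l|\to 0$. The denominator estimate $|D^l - D|\to 0$ follows by specializing to $f\equiv 1$. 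Combining these bounds with the ratio identity, and taking the supremum over $f$ with $\|f\|_\infty\le 1$, gives $\dtv{\pi_n^l - \pi_n}\to 0$ and closes the induction.
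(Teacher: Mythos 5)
The paper does not actually prove Lemma~\ref{lmMarginals}: it imports the statement from \cite{Crisan20} without argument, so there is no in-paper proof to compare against. Your proposal is the standard (and, as far as I can tell, correct) way to establish it: induction on $n$, the ratio identity for $\Phi_n^l(\pi_{n-1}^l)(f)-\Phi_n(\pi_{n-1})(f)$, and a three-way decomposition of the numerator into potential, kernel and marginal perturbations, each handled by the splitting trick plus dominated convergence; this is essentially the argument of \cite{Crisan20}. The one caveat you flag --- that the term $A_l$ needs a uniform-in-$l$ bound $\sup_l \|g_n^l\|_\infty<\infty$ (or at least uniform integrability of $|g_n^l-g_n|$ against $K_n\pi_{n-1}$) --- is a genuine hypothesis not literally guaranteed by the paper's definition of $\mbD$-convergence, which only asserts pointwise convergence of $g_n^l$ and boundedness of each model separately; it must be imposed, as it is in \cite{Crisan20}. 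It is worth noting, however, that in the only place this lemma is invoked in the paper (Section~\ref{ssApproxErr}), the constrained models $\hat\mS^l=(\hat\pi_0^l,\hat K^l,g)$ share the potentials $g$ of the base model, so $g_n^l=g_n$, the term $A_l$ vanishes identically, and your delicate case does not arise. Two minor points: the denominator control should be stated as $D^l\to D>0$ \emph{after} the numerator estimates (it is the case $f\equiv 1$, as you say), and your TV bounds should carry the factor $2$ consistently given the paper's normalization $\dtv{\alpha-\beta}=\sup_A|\alpha(A)-\beta(A)|$; neither affects correctness.
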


%
\subsection{Markov kernels and stability of the optimal filter}

Let $\{\Phi_n\}_{n \ge 0}$ be the sequence of PU operators generated by the Markov kernels $K=\{K_n\}_{n\ge 1}$ and the likelihoods $g=\{g_n\}_{n\ge 1}$ according to \eqref{eqPUop} (and note that $\Phi_n$ does {\em not} depend on $\pi_0$).


\begin{definition}
The sequence of optimal filters $\{\pi_n\}_{n \ge 1}$, where $\pi_n=\Phi_n(\pi_{n-1})$, is stable when
$
\lim_{n\to\infty} \dtv{ \Phi_{0:n}(\mu) - \Phi_{0:n}(\eta) } = 0
$
for any pair of probability measures $\eta,\mu \in \mP(\mX)$.
\end{definition}

The terminology ``stability of the optimal filter'' is well established in the literature \cite{DelMoral01c,DelMoral04}. We note, however, that stability is a property of the operators $\Phi_n$, rather than the probability laws $\pi_n$ (see also \cite{Crisan20} for further details). A sufficient condition for $\Phi_n$ to be stable is that the Markov kernels in $K$ are {\em mixing} and the likelihood functions in $g$ integrable w.r.t. some reference probability density functions (pdfs) $u=\{u_n\}_{n\ge 1}$ on $\mX$.

\begin{definition} \label{defMixing}
The Markov kernels $K=\{K_n\}_{n\ge 1}$ are mixing with constant $0<\gamma<1$ when
\begin{itemize}
\item[(a)] $K_n(x',\sd x) = k_n(x',x) \sd x$ for every $n \ge 1$, where $k_n: \mX \times \mX \mapsto [0,\infty)$ is a pdf w.r.t. the Lebesgue measure $\sd x$, and
\item[(b)] there is a sequence of pdfs $u_n:\mX\mapsto [0,\infty)$ such that
\beq
\gamma u_n(x) \le k_n(x',x) \le \frac{1}{\gamma} u_n(x)
\quad \text{for every $n \ge 1$ and every $x' \in \mX$.}
\nn
\eeq
\end{itemize}
\end{definition}

\begin{lemma} \label{lmStability}
Let $\{ \Phi_n \}_{n \ge 1}$ be the PU operators generated by the Markov kernels $K=\{K_n\}_{n \ge 1}$ and the potential functions $g=\{g_n\}_{n\ge 1}$. Assume that the Markov kernels are mixing with constant $\gamma>0$ and pdfs $u=\{u_n\}_{n\ge 1}$, and
\beq
\int_\mX g_n(x)u_n(x) \sd x < \infty, \quad \text{for all $n\ge 1$}. 
\nn
\eeq
Then, for any $n > m \ge 0$, the composition of PU operators satisfies the inequality
\beq
\dtv{\Phi_{m:n}(\mu) - \Phi_{m:n}(\eta)} \le \frac{(1-\gamma^2)^{n-m}}{\gamma^2} \dtv{\mu - \eta}
\nn
\eeq
and, in particular, $\lim_{n\to\infty} \dtv{ \Phi_{0:n}(\mu) - \Phi_{0:n}(\eta) } = 0$ for any $\mu,\eta \in \mP(\mX)$.
\end{lemma}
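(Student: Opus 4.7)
The plan is to factor the prediction--update semigroup $\Phi_{m:n}$ as a reweighting of the input measure followed by the action of a genuine Markov kernel, and then to bound each factor separately using the mixing assumption. Introduce the unnormalized Feynman--Kac kernels $\{Q_{m,n}\}_{0\le m\le n}$ by $Q_{n,n}(x,\sd y):=\delta_x(\sd y)$ and $Q_{m,n}(x,\sd y):=\int K_{m+1}(x,\sd z)\,g_{m+1}(z)\,Q_{m+1,n}(z,\sd y)$ for $m<n$. A direct calculation from \eqref{eqPUop} yields the factorization
\beq
\Phi_{m:n}(\mu) = \pi_\mu^{m,n}\,\bar P_{m,n}, \nn
\eeq
where $\pi_\mu^{m,n}(\sd x):=Q_{m,n}(x,\mX)\mu(\sd x)/\mu Q_{m,n}(\mX)$ is a probability measure and $\bar P_{m,n}(x,\sd y):=Q_{m,n}(x,\sd y)/Q_{m,n}(x,\mX)$ is a Markov kernel. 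Both are well defined because the mixing inequalities together with $\int g_n u_n\,\sd x<\infty$ jointly imply $0<Q_{m,n}(x,\mX)<\infty$ uniformly in $x$.

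First I would bound $\dtv{\pi_\mu^{m,n}-\pi_\eta^{m,n}}$. Applying Definition~\ref{defMixing} only to the first kernel $K_{m+1}$ appearing in the recursion gives the sandwich $\gamma W_{m,n}\le Q_{m,n}(x,\mX)\le\gamma^{-1}W_{m,n}$ for a positive constant $W_{m,n}$ independent of $x$, so the normalized weight $h_{m,n}(x):=Q_{m,n}(x,\mX)/\sup_{x'}Q_{m,n}(x',\mX)$ lies in $[\gamma^2,1]$. A standard calculation based on the identity $\pi_\mu^{m,n}(A)-\pi_\eta^{m,n}(A)=\mu(h_{m,n})^{-1}(\mu-\eta)\bigl([\mathbb{1}_A-\pi_\eta^{m,n}(A)]h_{m,n}\bigr)$, combined with $\mu(h_{m,n})\ge\gamma^2$ and oscillation at most $1$, then yields $\dtv{\pi_\mu^{m,n}-\pi_\eta^{m,n}}\le\gamma^{-2}\dtv{\mu-\eta}$.

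The more delicate step is to prove $\beta(\bar P_{m,n})\le(1-\gamma^2)^{n-m}$, where $\beta$ denotes the Dobrushin contraction coefficient. The central observation is the one-step recursion $\bar P_{m,n}(x,\sd y)=\int\tilde K_{m,n}(x,\sd z)\,\bar P_{m+1,n}(z,\sd y)$, where the twisted Markov kernel $\tilde K_{m,n}(x,\sd z)\propto k_{m+1}(x,z)\,g_{m+1}(z)\,Q_{m+1,n}(z,\mX)\,\sd z$ arises by pulling the $z$-weight $g_{m+1}(z)Q_{m+1,n}(z,\mX)$ out of $\bar P_{m,n}$. Since that weight depends only on $z$, it cancels in the relevant ratio, and the mixing bounds on $k_{m+1}$ alone produce a Doeblin minorization $\tilde K_{m,n}(x,\cdot)\ge\gamma^2 V_{m,n}$ for a probability measure $V_{m,n}$ not depending on $x$; hence $\beta(\tilde K_{m,n})\le 1-\gamma^2$. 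Submultiplicativity of $\beta$ under composition of Markov kernels, together with the trivial bound $\beta(\bar P_{n,n})=1$, closes a backward induction on $n-m$.

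Combining the two estimates via $\dtv{\Phi_{m:n}(\mu)-\Phi_{m:n}(\eta)}=\dtv{(\pi_\mu^{m,n}-\pi_\eta^{m,n})\bar P_{m,n}}\le\beta(\bar P_{m,n})\,\dtv{\pi_\mu^{m,n}-\pi_\eta^{m,n}}$ delivers the displayed inequality, and the stability conclusion $\lim_{n\to\infty}\dtv{\Phi_{0:n}(\mu)-\Phi_{0:n}(\eta)}=0$ follows because $0<1-\gamma^2<1$. I expect the main obstacle to be carrying through the Doeblin minorization for the twisted kernel $\tilde K_{m,n}$ cleanly — in particular, verifying that the $z$-dependent weight $g_{m+1}(z)Q_{m+1,n}(z,\mX)$ factors out of the numerator/denominator without spoiling the constant $\gamma^2$ — together with inductively checking that the normalizers $W_{m,n}$ remain finite, which ultimately reduces to the integrability hypothesis $\int g_n(x)u_n(x)\sd x<\infty$.
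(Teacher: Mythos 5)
Your proof is correct. The paper does not supply its own argument for Lemma~\ref{lmStability} --- it simply cites Lemma~10 of the K\"unsch (2005) reference --- and your decomposition of $\Phi_{m:n}$ into the reweighting $\mu \mapsto \pi_\mu^{m,n}$ (bounded by $\gamma^{-2}$ via the sandwich $\gamma W_{m,n}\le Q_{m,n}(x,\mX)\le \gamma^{-1}W_{m,n}$) followed by the normalized Feynman--Kac kernel $\bar P_{m,n}$ (Dobrushin coefficient at most $(1-\gamma^2)^{n-m}$ via the Doeblin minorization of the twisted one-step kernels) is precisely the standard argument behind that cited result, reproducing the stated constant $(1-\gamma^2)^{n-m}/\gamma^2$ exactly.
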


This is a well known result (see Lemma 10 in \cite{Kunsch05}). The mixing condition in Definition \ref{defMixing} is sufficient but not necessary for stability.

\section{Filtering on constrained state space models} \label{sConstrained}

%
\subsection{Constrained models} \label{ssConstrainedModels}

Let $\mS = (\pi_0, K, g)$ be a Markov SSM in $\mbS$ and choose a sequence $\mC=\{\mC_n\}_{n \ge 0}$ of compact subsets of the state space $\mX$. We 
consider a constrained SSM $\hat \mS = (\hat \pi_0,\hat K,g) \in \mbS$, where $\hat K=\{\hat K_n\}_{n\ge 1}$,
\beq
\hat K_n(x',\sd x) := \frac{
    \Ind_{\mC_n}(x) K_n(x',\sd x)
}{
    \int_{\mC_n} K_n(x',\sd \tilde x)
},
\label{eqBasicConstraint}
\eeq
$\Ind_{\mC_n}$ is the indicator function for the set in the subscript, and $\hat\pi_0(\sd x) = \frac{\Ind_{\mC_0}(x)\pi_0(\sd x)}{\int_{\mC_0} \pi_0(\sd x)}$.
It is easy to see that $\int_{\mC_n} \hat K_n(x',\sd x) = 1$ for any $x'\in\mX$ and hence the kernels in $\hat K$ are well defined. 

The PU operator $\hat\Phi_n$ for model $\hat\mS$ is constructed as in Eq. \eqref{eqPUop}, i.e., for a real, measurable test function $f:\mX\mapsto\Real$ we have 
\beq
\hat\Phi_n(\mu)(f) := \frac{
    \hat K_n \mu(fg_n)
}{
    \hat K_n\mu(g_n)
}
\label{eqCPUop}
\eeq
and, given an initial distribution $\hat\pi_0$, it yields the sequence of marginal posterior distributions $\hat\pi_n = \hat\Phi_n(\hat\pi_{n-1}) = \hat \Phi_{0:n}(\hat\pi_0)$, where $\hat\Phi_{0:n}=\hat\Phi_n \circ \cdots \circ \hat\Phi_1$. Also note that
\beq
\hat\pi_n(\sd x) = \hat\Phi_n(\hat\pi_{n-1})(\sd x)= \frac{
    g_n(x)\hat\xi_n(\sd x)
}{
    \hat\xi_n(g_n)
},
\nn
\eeq
where $\hat\xi_n(\sd x) = \hat K_n\hat\pi_{n-1}(\sd x)$ is the constrained one-step-ahead predictive measure.

%
\subsection{Stability}

Let us assume that the SSM of interest, $\mS$, satisfies some mild regularity conditions regarding the Markov kernels $K=\{K_n\}_{n\ge 1}$ and the likelihoods $g=\{g_n\}_{n\ge 1}$.

\begin{assumption} \label{assModel1}
The elements of model $\mS=(\pi_0,K,g)\in\mbS$ satisfy the conditions below:  
\begin{itemize}
\item there are density functions $k_n:\mX\times\mX\mapsto(0,\infty)$ such that $K_n(x',\sd x)=k_n(x',x)\sd x$ (in particular, $k_n(x',x)>0$ for any $x',x\in \mX$ and $n\ge 1$),
\item $\|k\|_\infty := \sup_{n,x',x} k_n(x',x) < \infty$, and 
\item $g_n>0$ for every $n\ge 1$ and $\|g\|_\infty := \sup_{n\ge 1} \|g_n\|_\infty < \infty$.
\end{itemize}
\end{assumption}

The sufficient conditions for stability in Lemma \ref{lmStability} hold most naturally for a constrained SSM $\hat \mS=(\hat \pi_0, \hat K, g)$ constructed from a model $\mS=(\pi_0,K,g)$ that satisfies Assumption \ref{assModel1}.

\begin{theorem} \label{thStabilitySl}
Let $\mS=(\pi_0,K,g) \in \mbS$ be a SSM for which Assumption \ref{assModel1} holds. If we choose a sequence $\mC=\{\mC_n\}_{n\ge 0}$ such that 
\beq
\inf_{n\ge 1} \inf_{(x',x) \in \mC_{n-1} \times \mC_n} k_n(x',x) \ge \bar\gamma
\label{eqlowerboundinC}
\eeq
for some $\bar\gamma>0$, then there is some constant $\gamma > 0$ such that
\beq
\dtv{\hat \Phi_{0:n}(\mu) - \hat\Phi_{0:n}(\eta)} \le \frac{(1-\gamma^2)^n}{\gamma^2}\dtv{\mu-\eta},
\label{eqStabHatS}
\eeq
where $\hat\Phi_n$ is the constrained PU operator in \eqref{eqCPUop}. In particular, 
$$
\lim_{n\to\infty} \dtv{\hat \Phi_{0:n}(\mu) - \hat\Phi_{0:n}(\eta)} = 0 
\quad \text{for any $\mu,\eta \in \mP(\mX)$}
$$ 
and the SSM $\hat \mS$ generates a stable sequence of marginal probability laws $\{\hat \pi_n\}_{n\ge 0}$.
\end{theorem}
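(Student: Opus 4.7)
The plan is to verify that the constrained kernels $\hat K = \{\hat K_n\}_{n\ge 1}$ defined in \eqref{eqBasicConstraint} satisfy a mixing condition in the sense of Definition \ref{defMixing} (at least on transitions with sources in $\mC_{n-1}$), and then invoke the stability of Lemma \ref{lmStability} for the constrained PU operators $\hat\Phi_n$.

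First I would fix the reference pdfs $u_n(x) := \Ind_{\mC_n}(x)/\lambda(\mC_n)$, where $\lambda$ is Lebesgue measure on $\Real^{\dx}$ (note that $\lambda(\mC_n)\in(0,\infty)$ because $\mC_n$ is compact and \eqref{eqlowerboundinC} precludes $\lambda(\mC_n)=0$). Since $\hat K_n$ has density
\[
\hat k_n(x',x) \;=\; \frac{k_n(x',x)\,\Ind_{\mC_n}(x)}{\int_{\mC_n}k_n(x',z)\,\sd z},
\]
Assumption \ref{assModel1} and the hypothesis \eqref{eqlowerboundinC} together yield, for every $x'\in\mC_{n-1}$ and $x\in\mC_n$, the numerator bound $\bar\gamma \le k_n(x',x)\le \|k\|_\infty$ and the denominator bound $\bar\gamma\lambda(\mC_n) \le \int_{\mC_n}k_n(x',z)\,\sd z \le \|k\|_\infty\lambda(\mC_n)$. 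Combining these gives
\[
\frac{\bar\gamma}{\|k\|_\infty}\, u_n(x) \;\le\; \hat k_n(x',x) \;\le\; \frac{\|k\|_\infty}{\bar\gamma}\, u_n(x),
\]
which is precisely the mixing condition on transitions $\mC_{n-1}\to\mC_n$ with constant $\gamma := \bar\gamma/\|k\|_\infty\in(0,1]$ (shrinking $\gamma$ slightly below $1$ if equality happens to occur). The integrability requirement of Lemma \ref{lmStability}, $\int g_n u_n\,\sd x <\infty$, is then automatic because $g_n\le\|g\|_\infty$ and $u_n$ is a pdf. Applying Lemma \ref{lmStability} delivers the geometric contraction \eqref{eqStabHatS}, from which $\lim_n \dtv{\hat\Phi_{0:n}(\mu)-\hat\Phi_{0:n}(\eta)} = 0$ follows immediately by letting $n\to\infty$ with $\gamma\in(0,1)$ fixed.

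The main obstacle I anticipate is that Definition \ref{defMixing} nominally requires mixing for every $x'\in\mX$, whereas my two-sided density bound only holds for $x'\in\mC_{n-1}$. This is resolved by the observation that for any initial $\mu\in\mP(\mX)$, the measure $\hat\Phi_1(\mu)$ is already supported on $\mC_1$; thereafter every iterate $\hat\Phi_k$ acts on a measure concentrated on $\mC_{k-1}$, so the restricted mixing bound is precisely what the iterated composition $\hat\Phi_{0:n}$ sees. The possible slack in the single first application can be absorbed into a marginally smaller effective $\gamma$ (equivalently, into the $1/\gamma^2$ prefactor), leaving the form of the bound in \eqref{eqStabHatS} intact. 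A secondary point worth flagging is that the hypothesis \eqref{eqlowerboundinC} tacitly requires $\mC_n$ to have positive Lebesgue measure for each $n$ (otherwise the infimum is vacuous); this mild non-degeneracy condition should be made explicit when formalising the argument.
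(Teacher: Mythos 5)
Your proof is correct and follows essentially the same route as the paper: verify that the constrained kernels satisfy the mixing condition of Definition \ref{defMixing} on $\mC_{n-1}\times\mC_n$ and then invoke Lemma \ref{lmStability}. The only material difference is your choice of reference pdf $u_n=\Ind_{\mC_n}/\lambda(\mC_n)$ (the paper instead uses the normalised transition density $k_n(x_o,\cdot)/\int_{\mC_n}K_n(x_o,\sd x)$ for a fixed $x_o\in\mC_{n-1}$), which in fact yields the sharper mixing constant $\gamma=\bar\gamma/\|k\|_\infty$ rather than the paper's $\gamma=\bar\gamma^2/\|k\|_\infty^2$; the two caveats you flag (the mixing bound holding only for sources $x'\in\mC_{n-1}$, and the tacit requirement $\lambda(\mC_n)>0$) are equally present, and left implicit, in the paper's own argument.
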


\begin{proof}
If \eqref{eqlowerboundinC} holds then, for any $x_o \in \mC_{n-1}$, we have
\beq
0 < \frac{\bar\gamma}{\|k\|_\infty} < \frac{k_n(x',x)}{k_n(x_o,x)} < \frac{\| k \|_\infty}{\bar\gamma} < \infty, \quad \forall (x',x)\in\mC_{n-1}\times\mC_n
\label{eqbarkmix}
\eeq 
where the densities $k_n(x',x)$ and $k_n(x_o,x)$ are {\em not} normalized on $\mC_n$. Moreover, \eqref{eqlowerboundinC} and Assumption \ref{assModel1} also yield
\beq
\bar\gamma |\mC_n| \le \int_{\mC_n} k_n(x_o, x)\sd x \le \|k\|_\infty|\mC_n|, \quad \text{for any $x_o \in \mC_{n-1}$,}
\nn
\eeq
where $|\mC_n|=\int_{\mC_n} \sd x$. As a consequence,
\beq
\frac{\bar\gamma}{\|k\|_\infty} \le \frac{
    \int_{\mC_n} k_n(x',x)\sd x
}{
    \int_{\mC_n} k_n(x_o,x)\sd x
} \le \frac{\|k\|_\infty}{\bar \gamma}
\label{eqRatioBounds}
\eeq
and combining \eqref{eqbarkmix} with \eqref{eqRatioBounds} yields 
\beq
\frac{\bar\gamma^2}{\|k\|_\infty^2} < \frac{\hat k_n(x',x)}{u_n(x)} < \frac{\| k \|_\infty^2}{\bar\gamma^2} < \infty, \quad \forall (x',x)\in\mC_{n-1}\times\mC_n,
\label{eqbarkmix2}
\eeq 
where $\hat k_n(x',x) = \frac{k_n(x',x)}{\int_{\mC_n} K_n(x', \sd x)}$ and $u_n(x)=\frac{k_n(x_o,x)}{\int_{\mC_n} K_n(x_o, \sd x)}$ are normalized probability densities on $\mC_n$. The inequalities in \eqref{eqbarkmix2} imply that the constrained kernels $\hat K=\{\hat K_n\}_{n \ge 1}$ are mixing with constant $\gamma := \frac{\bar\gamma^2}{\|k\|_\infty^2}>0$ and the pdfs $u_n(x)$ satisfy $\int_{\mC_n} g_n(x) u_n(x)\sd x < \|g\|_\infty < \infty$.



Therefore, we can readily apply Lemma \ref{lmStability} to obtain the desired inequality \eqref{eqStabHatS}. 
\end{proof}

If the state $X_n$ is a homogeneous process, i.e., $K_n=K$ for some Markov kernel $K:\mX\times\mB(\mX)\mapsto[0,1]$, and Assumption \ref{assModel1} holds, it is always possible to choose compact sets $\{ \mC_n \}_{n \ge 1}$ such that the uniform lower bound in \eqref{eqlowerboundinC} exists. Also note that, while Gaussian kernels are not mixing according to Definition \ref{defMixing}, they satisfy Assumption \ref{assModel1} and, therefore, they can be constrained to yield a stable sequence of posterior measures $\hat \pi_n$, $n \ge 0$.

%
\subsection{Approximation errors} \label{ssApproxErr}

We are interested in constructing constrained models $\hat \mS$ for which the posterior marginals $\hat \pi_n$ are close to the marginals $\pi_n$ generated by the original model $\mS$ under (reasonable) regularity assumptions. To this end, we adopt the framework of \cite{Crisan20}. In particular, consider the sequence $\{\mC^l\}_{l\ge 1}$ where each element $\mC^l = \{ \mC_n^l \}_{n \ge 0}$ is a sequence of compact sets that can be used to constrain the model $\mS$. Specifically, we construct the sequence of constrained models $\hat \mS^l=(\hat \pi_0^l,\hat K^l, g)$, where $\hat K^l = \{\hat K_n^l\}_{n\ge 1}$, 
$$
\hat K_n^l(x',\sd x) = \frac{
    \Ind_{\mC_n^l}(x)K_n(x',\sd x)
}{
    \int_{\mC_n^l} K_n(x',\sd x)
}
\quad
\text{and}
\quad
\hat \pi_0^l(\sd x) = \frac{\Ind_{\mC_0^l}(x)\pi_0(\sd x)}{\int_{\mC_0^l} \pi_0(\sd x)}.
$$
If $\mC^l$ is constructed in such a way that $\lim_{l\to\infty} \mC_n^l = \mX$ for every $n$, then it is apparent that $\lim_{l\to\infty} \hat \mS^l = \mS$ in the topology $\mbD$ and, by Lemma \ref{lmMarginals},
\beq
\lim_{l\to\infty} \hat \pi_n^l = \lim_{l\to\infty} \hat \Phi_{0:n}^l(\hat \pi_0^l) = \pi_n \quad \text{in total variation,}
\nn
\eeq
where the operators $\hat\Phi_n^l$, $n \ge 1$, are constructed in the obvious way given the kernel $\hat K_n^l$ and Eq. \eqref{eqCPUop}. We also denote $\hat\xi_n^l(\sd x)=\hat K_n^l\hat \pi_{n-1}^l(\sd x)$.

Not every constraint $\mC^l=\{\mC_n^l\}_{n\ge 0}$ on the state space $\mX$ leads to a ``reasonable'' approximation of the SSM $\mS$ and the Bayesian filters $\{\pi_n\}_{n\ge 1}$. In general, the subsets $\mC_{n}^l$ should be chosen in such a way that they capture a significant mass of the transition probability, i.e., that given a point $x'\in\mC_{n-1}^l$, the transition probability 
\beq
\mbP(X_n \in \mC_n^l| X_{n-1}=x') = \int_{\mC_n^l} K_n(x',\sd x)
\nn
\eeq
should be large enough. An intuitive interpretation of this condition is that the sequence of subsets $\{\mC_n^l\}_{n\ge 1}$ should be ``in agreement'' with the (unconstrained) dynamics of the state sequence $X_n$. This is made formal by Assumption \ref{assCl-1} below.

\begin{assumption} \label{assCl-1}
The family of compact subsets $\{\mC_n^l\}_{n\ge 0, l\ge 1}$ of the state space $\mX$ is constructed to guarantee that, for every $n\ge 0$, $\lim_{l\to\infty} \mC_n^l = \mX$ and there are positive sequences $\{ \varepsilon_n^l \}_{n \ge 0, l \ge 1}$, increasing with $l$ for every fixed $n$, such that
\beq
\pi_0(\mC_0^l) \ge \varepsilon_0^l, \quad \inf_{x'\in\mC_{n-1}^l} \int_{\mC_n^l} K_n(x',\sd x) \ge \varepsilon_n^l
\quad \text{and} \quad
\lim_{l\to\infty} \varepsilon_n^l = 1 ~~\forall n \ge 0.
\nn
\eeq
\end{assumption}

Based on the assumption above, and mild conditions on the original model $\mS$, we can obtain a first refinement of Lemma \ref{lmMarginals}.

\begin{theorem}\label{thConsistencyCl}
Let $\mS=(\pi_0,K,g)$ be a SSM where $g_n>0$ and $\| g_n \|_\infty < \infty$ for every $n \ge 1$. Let $\{\mC^l\}_{l\ge 1}$ be a sequence of constraints of the state space $\mX$ for which Assumption \ref{assCl-1} holds and let $\hat \mS^l=(\hat \pi_0^l,K^l,g)$, $l \ge 1$, be the sequence of constrained SSMs. For every measurable and real test function $f\in B(\mX)$ and every $n \ge 0$ there are finite constants $c_0, c_1, \ldots, c_n < \infty$, independent of $l$, such that
\beq
\left|
    \pi_n(f) - \hat \pi_n^l(f) 
\right| \le \sum_{i=0}^n c_i \frac{1-\varepsilon_i^l}{\varepsilon_i^l}.
\nn
\eeq
\end{theorem}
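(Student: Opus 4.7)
The plan is to prove the bound by induction on $n$, exploiting the recursions $\pi_n=\Phi_n(\pi_{n-1})$ and $\hat\pi_n^l=\hat\Phi_n^l(\hat\pi_{n-1}^l)$. For the base case $n=0$, the explicit formula $\hat\pi_0^l(\sd x)=\Ind_{\mC_0^l}(x)\pi_0(\sd x)/\pi_0(\mC_0^l)$ combined with the split $\pi_0(f)=\pi_0(f\Ind_{\mC_0^l})+\pi_0(f\Ind_{\mC_0^{l,c}})$ and the bound $\pi_0(\mC_0^l)\ge\varepsilon_0^l$ from Assumption \ref{assCl-1} yields, by elementary algebra, $|\pi_0(f)-\hat\pi_0^l(f)| \le 2\|f\|_\infty (1-\varepsilon_0^l)/\varepsilon_0^l$, so $c_0 = 2\|f\|_\infty$ suffices.

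For the inductive step I would insert $\Phi_n(\hat\pi_{n-1}^l)$ between $\pi_n$ and $\hat\pi_n^l$, writing $\pi_n(f) - \hat\pi_n^l(f) = T_1 + T_2$ with $T_1 := \Phi_n(\pi_{n-1})(f) - \Phi_n(\hat\pi_{n-1}^l)(f)$ and $T_2 := \Phi_n(\hat\pi_{n-1}^l)(f) - \hat\Phi_n^l(\hat\pi_{n-1}^l)(f)$. To bound $T_1$, I would apply the identity $a/b - c/d = (a-c)/b + (c/d)(d-b)/b$ with $a = K_n\pi_{n-1}(fg_n)$, $b = K_n\pi_{n-1}(g_n)$, $c = K_n\hat\pi_{n-1}^l(fg_n)$, $d = K_n\hat\pi_{n-1}^l(g_n)$. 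This reduces the estimation of $T_1$ to controlling $|(\pi_{n-1}-\hat\pi_{n-1}^l)(K_n(fg_n))|$ and $|(\pi_{n-1}-\hat\pi_{n-1}^l)(K_n(g_n))|$. Since $K_n$ is Markov, $K_n(fg_n)$ and $K_n(g_n)$ are bounded measurable test functions on $\mX$ with sup-norms at most $\|f\|_\infty\|g_n\|_\infty$ and $\|g_n\|_\infty$, so the induction hypothesis applies directly to each, giving $|T_1| \le C_1 \sum_{i=0}^{n-1} c_i'(1-\varepsilon_i^l)/\varepsilon_i^l$, where $C_1$ involves the factor $1/K_n\pi_{n-1}(g_n)$, which is strictly positive because $\pi_n\in\mP(\mX)$ is well defined.

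The crux is the sublemma controlling $T_2$: for every $x'\in\mC_{n-1}^l$ and bounded measurable $h$, the definition $\hat K_n^l(x',\sd x) = \Ind_{\mC_n^l}(x) K_n(x',\sd x)/K_n(x',\mC_n^l)$ together with Assumption \ref{assCl-1} gives
$$ |K_n(h)(x') - \hat K_n^l(h)(x')| \le \|h\|_\infty K_n(x',\mC_n^{l,c}) + \|h\|_\infty (1-K_n(x',\mC_n^l)) \le 2\|h\|_\infty(1-\varepsilon_n^l). $$
Integrating against $\hat\pi_{n-1}^l$ (which is supported in $\mC_{n-1}^l$) and plugging the resulting predictive-TV bound into the same algebraic decomposition used for $T_1$ yields $|T_2| \le C_2(1-\varepsilon_n^l)/\varepsilon_n^l$, where $C_2 \le 4\|f\|_\infty\|g_n\|_\infty / K_n\hat\pi_{n-1}^l(g_n)$.

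The main obstacle is the $l$-dependence of $C_2$ through the denominator $K_n\hat\pi_{n-1}^l(g_n)$. I would resolve this via Lemma \ref{lmMarginals}: Assumption \ref{assCl-1} combined with $\mC_n^l\to\mX$ implies $\hat\mS^l\to\mS$ in the topology $\mbD$, so $\hat\pi_{n-1}^l\to\pi_{n-1}$ in total variation, and consequently $K_n\hat\pi_{n-1}^l(g_n) = \hat\pi_{n-1}^l(K_n(g_n))\to K_n\pi_{n-1}(g_n)>0$. Hence $K_n\hat\pi_{n-1}^l(g_n) \ge \tfrac{1}{2}K_n\pi_{n-1}(g_n)$ for all $l$ past some threshold $L$, giving an $l$-independent upper bound on $C_2$. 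For the finitely many smaller values $l<L$, the trivial estimate $|\pi_n(f)-\hat\pi_n^l(f)|\le 2\|f\|_\infty$ is absorbed into the constants $c_0,\ldots,c_n$ by enlarging them, using that the sequence $\varepsilon_i^l$ is monotonically increasing in $l$, so $(1-\varepsilon_i^l)/\varepsilon_i^l$ is bounded away from zero for $l\le L$. Combining the bounds on $T_1$ and $T_2$ then closes the induction with finite, $l$-independent constants $c_0,\ldots,c_n$.
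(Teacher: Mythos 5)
Your proposal is correct and shares the paper's inductive skeleton — same base case with the split over $\mC_0^l\cup\overline{\mC_0^l}$, and the same use of Assumption \ref{assCl-1} to control the kernel discrepancy on $\mC_{n-1}^l\times\mC_n^l$ versus its complement — but the inductive step is organized differently, and the difference matters for how the $l$-independence of the constants is obtained. The paper first establishes a bound on the \emph{predictive} measures, $|\xi_n(f)-\hat\xi_n^l(f)|\le\sum_{i\le n}\bar c_i(1-\varepsilon_i^l)/\varepsilon_i^l$, and only then applies the quotient decomposition a single time, choosing the common denominator to be the unconstrained normalizer $\xi_n(g_n)=K_n\pi_{n-1}(g_n)$; since this quantity does not depend on $l$, the constants $c_i=2\|f\|_\infty\|g_n\|_\infty\bar c_i/\xi_n(g_n)$ are $l$-independent with no further work. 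Your splitting into $T_1+T_2$ at the level of the normalized filters forces the $l$-dependent denominator $K_n\hat\pi_{n-1}^l(g_n)$ into $T_2$, and you must then invoke Lemma \ref{lmMarginals} to show it converges to $\xi_n(g_n)>0$, followed by a finite-patching argument for $l<L$. That patching argument does go through, but only with a caveat you should make explicit: your claim that $(1-\varepsilon_i^l)/\varepsilon_i^l$ is bounded away from zero for $l\le L$ fails if $\varepsilon_i^l=1$; in that degenerate case one checks directly that $\hat\pi_n^l=\pi_n$ (the constraint captures all the transition mass), so there is nothing to absorb. Both routes are valid; the paper's choice of denominator is the more economical one and removes the only delicate step in your argument, so it is worth adopting.
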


See Appendix \ref{apConsistencyCl} for a proof. The terms $\varepsilon_i^l$, $i \le n$, are the increasing sequences in Assumption \ref{assCl-1}, which satisfy $\lim_{l\to\infty} \varepsilon_i^l = 1$, hence $\left|
    \pi_n(f) - \hat \pi_n^l(f) 
\right| \to 0$ as expected from Lemma \ref{lmMarginals}.

Assumption \ref{assCl-1} can be strengthened by imposing that $\varepsilon_n^l \to 1$ uniformly over all $n \ge 0$.

\begin{assumption} \label{assCl-2}
The family of compact subsets $\{\mC_n^l\}_{n\ge 0, l\ge 1}$ of the state space $\mX$ is constructed to guarantee that, for every $n\ge 0$, $\lim_{l\to\infty} \mC_n^l = \mX$ and there is a single increasing sequence $\varepsilon^l > 0$ such that
\beq
\pi_0(\mC_0^l) \ge \varepsilon^l, \quad \inf_{x'\in\mC_{n-1}^l} \int_{\mC_n^l} K_n(x',\sd x) \ge \varepsilon^l
\quad \text{and} \quad
\lim_{l\to\infty} \varepsilon^l = 1.
\nn
\eeq
\end{assumption}

This stronger assumption on the constraints $\{\mC^l\}_{l \ge 1}$ yields a straightforward corollary of Theorem \ref{thConsistencyCl}.

\begin{corollary}\label{corConsistencyCl}
Let $\mS=(\pi_0,K,g)$ be a SSM where $g_n>0$ and $\| g_n \|_\infty < \infty$ for every $n \ge 1$. Let $\{\mC^l\}_{l\ge 1}$ be a sequence of constraints of the state space $\mX$ for which Assumption \ref{assCl-2} holds and let $\hat \mS^l=(\hat \pi_0^l,K^l,g)$, $l\ge 1$, be the sequence of constrained SSMs. For every measurable and real test function $f\in B(\mX)$ and every $n \ge 0$ there is a finite constant ${\rm c}_n<\infty$ such that
\beq
\left|
    \pi_n(f) - \hat \pi_n^l(f) 
\right| \le {\rm c}_n \frac{1-\varepsilon^l}{\varepsilon^l}.
\nn
\eeq
\end{corollary}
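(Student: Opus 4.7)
The plan is to observe that Assumption \ref{assCl-2} is simply a strengthened version of Assumption \ref{assCl-1} in which the sequences $\{\varepsilon_n^l\}_{n\ge 0}$ collapse to a single sequence $\varepsilon^l$, common to all time indices. Concretely, setting $\varepsilon_n^l := \varepsilon^l$ for every $n \ge 0$ makes the hypotheses of Theorem \ref{thConsistencyCl} hold verbatim: the required lower bounds on $\pi_0(\mC_0^l)$ and on $\inf_{x'\in\mC_{n-1}^l}\int_{\mC_n^l} K_n(x',\sd x)$ are met, each $\varepsilon_n^l$ is an increasing sequence in $l$, and $\lim_{l\to\infty}\varepsilon_n^l = 1$ for every $n$.

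Given this observation, the first step is simply to invoke Theorem \ref{thConsistencyCl} to obtain, for any measurable real test function $f \in B(\mX)$ and every $n \ge 0$, finite constants $c_0, c_1, \ldots, c_n$ (independent of $l$) such that
\[
\left|\pi_n(f) - \hat\pi_n^l(f)\right| \le \sum_{i=0}^n c_i \frac{1 - \varepsilon_i^l}{\varepsilon_i^l}.
\]
The second step is to substitute $\varepsilon_i^l = \varepsilon^l$ for each $i$, which factors the common term $(1-\varepsilon^l)/\varepsilon^l$ out of the sum:
\[
\sum_{i=0}^n c_i \frac{1 - \varepsilon^l}{\varepsilon^l} = \left(\sum_{i=0}^n c_i\right) \frac{1-\varepsilon^l}{\varepsilon^l}.
\]
Defining ${\rm c}_n := \sum_{i=0}^n c_i < \infty$ (which is finite as a finite sum of finite quantities and independent of $l$) yields the claimed bound.

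There is essentially no obstacle here, since the result is a bookkeeping consequence of Theorem \ref{thConsistencyCl}. The only subtle point worth flagging is the dependence of the constant ${\rm c}_n$ on $n$: it grows (at worst) linearly in $n$ because it is a telescoping accumulation of the per-step approximation constants from the parent theorem. This growth is what prevents the bound from being automatically uniform in $n$, and it motivates the subsequent combination of the corollary with the stability result of Theorem \ref{thStabilitySl}, which would be needed to convert the error bound into a truly time-uniform one.
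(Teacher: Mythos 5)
Your proof is correct and coincides with the paper's own argument: Assumption \ref{assCl-2} gives $\varepsilon_n^l=\varepsilon^l$ for all $n$, so Theorem \ref{thConsistencyCl} applies verbatim and ${\rm c}_n=\sum_{i\le n}c_i$. One small caveat in your closing remark: the per-step constants $c_i$ of Theorem \ref{thConsistencyCl} themselves grow with $i$ (the paper notes in Remark \ref{rmDTV-3} that they increase exponentially), so ${\rm c}_n$ need not be merely linear in $n$, though this does not affect the validity of the corollary.
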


The proof is trivial: Assumption \ref{assCl-2} yields $\varepsilon_n^l = \varepsilon^l$ for all $n$, and ${\rm c}_n = \sum_{i \le n} c_i$, where the $c_i$'s are the constants in Theorem \ref{thConsistencyCl}.

The bounds on the approximation errors $|\pi_n(f) - \hat \pi_n^l(f)|$ can be improved based on Assumption \ref{assCl-2} and additional conditions on the base model $\mS$. Specifically, we may assume that $\mS$ generates a stable sequence of optimal filters $\pi_n = \Phi_{0:n}(\pi_0)$. Note that this is possible even if the Markov kernels $K=\{K_n\}_{n\ge 1}$ are not mixing according to Definition \ref{defMixing}. The following assumption identifies models that generate stable filters with arbitrary (possibly non-geometric) contraction rates.

\begin{assumption} \label{assModel2}
The model $\mS=(\pi_0,K,g) \in \mbS$ yields a stable sequence of optimal filters. In particular, there is a decreasing sequence $\{r_i\}_{i\ge 0}$ such that $r_0=1$, $\sum_{i=0}^\infty r_i < \infty$ and
$$
\dtv{\Phi_{m:n}(\mu)-\Phi_{m:n}(\eta)} \le r_{n-m}\dtv{\mu-\eta}
$$
for any pair of probability measures $\mu,\eta \in \mP(\mX)$ and $n \ge m$.
\end{assumption}

Under Assumption \ref{assModel2} it is possible to obtain uniform (over time $n$) bounds on the approximation error $|\pi_n(f) - \hat \pi_n^l(f)|$. However, the attainment of uniform approximation rates demands some additional conditions related to the likelihoods $g_n$. Specifically, we have worked under two alternative assumptions:
\begin{itemize}
\item[(a)] the likelihoods are uniformly bounded away from 0, i.e., $\inf_{n,x} g_n(x) > \zeta > 0$, or 
\item[(b)] the likelihoods are continuous, vanishing as $\lim_{\|x\|\to\infty} g_n(x) = 0$, and the compact sets $\mC_n^l$ are superlevel sets of $g_n$ for every $n\ge 1$, i.e., $\mC_n^l = \{ x \in \mX: g_n(x) \ge \zeta_n^l \}$ for some real sequences $\zeta_n^l > 0$, with $\lim_{l\to\infty} \zeta_n^l = 0$ for all $n$.
\end{itemize}
Assumptions (a) and (b) above yield Theorem \ref{thUniformDTV-1} and Theorem \ref{thUniformDTV-2}, respectively.


\begin{theorem} \label{thUniformDTV-1}
Let $\mS=(\pi_0,K,g)$ be a SSM for which Assumption \ref{assModel2} holds. Let $\{\mC^l\}_{l\ge 1}$ be a sequence of constraints of the state space $\mX$ that satisfies Assumption \ref{assCl-2}, and let $\hat \mS^l=(\hat \pi_0^l,K^l,g)$, $l \ge 1$, be the resulting sequence of constrained SSMs. If there are constants $\zeta>0$ and $\| g \|_\infty < \infty$ such that $\inf_{n\ge 1, x\in\mX} g_n(x) > \zeta$ and $\sup_{n \ge 1, x\in \mX} g_n(x) \le \| g \|_\infty$, respectively, then there is a finite constant $c<\infty$, independent of $n$, such that
\beq
\sup_{|f|\le 1} \left| \pi_n(f) - \hat \pi_n^l(f) \right| \le c \frac{1-\varepsilon^l}{\varepsilon^l}.
\label{eqThUnifDTV-1}
\eeq
In particular, $\lim_{l\to\infty} \sup_{n\ge 0} \dtv{\pi_n-\hat\pi_n^l}=0$.
\end{theorem}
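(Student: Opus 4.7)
The plan is to combine the stability of the unconstrained filter (Assumption \ref{assModel2}) with a uniform one-step error bound via a standard telescoping argument. For any $n\ge 0$, introduce the ``hybrid'' measures $a_i := \Phi_{i:n}\bigl(\hat\Phi_{0:i}^l(\hat\pi_0^l)\bigr)$, $i=0,1,\ldots,n$, so that $a_0 = \Phi_{0:n}(\hat\pi_0^l)$, $a_n = \hat\pi_n^l$, and decompose
\begin{equation}
\pi_n - \hat\pi_n^l = \bigl[\Phi_{0:n}(\pi_0) - \Phi_{0:n}(\hat\pi_0^l)\bigr] + \sum_{i=1}^n \bigl[a_{i-1} - a_i\bigr].
\nonumber
\end{equation}
Using the Markov property (specifically that $\Phi_{i-1:n} = \Phi_{i:n}\circ \Phi_i$) each summand can be written as $\Phi_{i:n}(\Phi_i(\hat\pi_{i-1}^l)) - \Phi_{i:n}(\hat\Phi_i^l(\hat\pi_{i-1}^l))$.

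Applying Assumption \ref{assModel2} to each piece then yields
\begin{equation}
\dtv{\pi_n - \hat\pi_n^l} \le r_n \dtv{\pi_0 - \hat\pi_0^l} + \sum_{i=1}^n r_{n-i} \dtv{\Phi_i(\hat\pi_{i-1}^l) - \hat\Phi_i^l(\hat\pi_{i-1}^l)}.
\nonumber
\end{equation}
The initial term is easily bounded by $2(1-\varepsilon^l)/\varepsilon^l$ since $\hat\pi_0^l = \Ind_{\mC_0^l}\pi_0/\pi_0(\mC_0^l)$ and $\pi_0(\mC_0^l)\ge \varepsilon^l$. The crux is therefore to produce a uniform (over $i$) one-step error bound of the form
\begin{equation}
\dtv{\Phi_i(\hat\pi_{i-1}^l) - \hat\Phi_i^l(\hat\pi_{i-1}^l)} \le C\,\frac{1-\varepsilon^l}{\varepsilon^l},
\nonumber
\end{equation}
for some $C<\infty$ depending only on $\zeta$ and $\|g\|_\infty$. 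Once this is in hand, summing the telescope gives $\dtv{\pi_n - \hat\pi_n^l} \le C'\bigl(\sum_{j=0}^\infty r_j\bigr)(1-\varepsilon^l)/\varepsilon^l$, and $\sum_j r_j<\infty$ by Assumption \ref{assModel2}, yielding the desired $n$-independent bound \eqref{eqThUnifDTV-1}.

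The main technical obstacle is the one-step bound, and I would attack it as follows. Fix $i$, write $\mu = \hat\pi_{i-1}^l$ (supported on $\mC_{i-1}^l$) and for a test function $f$ with $|f|\le 1$ introduce
\begin{equation}
\alpha(x') = \int_{\mC_i^l} f(x)g_i(x) K_i(x',\sd x), \quad \beta(x') = \int_{\mX\setminus\mC_i^l} f(x)g_i(x) K_i(x',\sd x),
\nonumber
\end{equation}
and similarly $\alpha_0,\beta_0$ with $f\equiv 1$. Assumption \ref{assCl-2} and the uniform bounds on $g_i$ give, for every $x' \in \mC_{i-1}^l$, $|\beta(x')|\vee |\beta_0(x')| \le \|g\|_\infty(1-\varepsilon^l)$ and $\alpha_0(x')\ge \zeta\varepsilon^l$. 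Then introduce the intermediate ratio $\tilde\Phi_i^l(\mu)(f) := \mu(\alpha)/\mu(\alpha_0)$ and split
\begin{equation}
\Phi_i(\mu)(f) - \hat\Phi_i^l(\mu)(f) = \bigl[\Phi_i(\mu)(f) - \tilde\Phi_i^l(\mu)(f)\bigr] + \bigl[\tilde\Phi_i^l(\mu)(f) - \hat\Phi_i^l(\mu)(f)\bigr].
\nonumber
\end{equation}
Each bracketed term has the algebraic form $\frac{a+b}{c+d}-\frac{a}{c} = \frac{bc-ad}{c(c+d)}$; for the first piece this directly exposes a numerator of order $(1-\varepsilon^l)$ and a denominator bounded below by $\zeta^2\varepsilon^l$. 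For the second piece, writing $1/Z_i^l(x') = 1+\rho(x')$ with $0\le \rho \le (1-\varepsilon^l)/\varepsilon^l$ (where $Z_i^l(x') = \int_{\mC_i^l} K_i(x',\sd x)\ge\varepsilon^l$) exposes the same order, using that $\mu(\alpha_0/Z_i^l)\ge \zeta$. Taking the supremum over $|f|\le 1$ delivers the sought one-step total-variation bound with a constant depending only on $\zeta$ and $\|g\|_\infty$ (after absorbing a harmless factor $(\varepsilon^l)^{-1}$, bounded for $l$ large since $\varepsilon^l\to 1$; for small $l$ the total-variation distance is anyway $\le 2$). Plugging back yields the uniform estimate, and the final statement $\lim_{l\to\infty}\sup_n \dtv{\pi_n-\hat\pi_n^l}=0$ follows at once.
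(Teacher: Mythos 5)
Your proposal is correct and follows essentially the same route as the paper's proof: the same telescoping decomposition through the hybrid measures, the same use of Assumption \ref{assModel2} to contract each term, the same bound $2(1-\varepsilon^l)/\varepsilon^l$ on the initial term, and a uniform one-step bound of order $(1-\varepsilon^l)/\varepsilon^l$ obtained from $\inf g_n>\zeta$, $\sup g_n\le\|g\|_\infty$ and Assumption \ref{assCl-2}. The only difference is cosmetic: you handle the one-step error by inserting the intermediate ratio $\tilde\Phi_i^l$ and doing two ratio-difference computations, whereas the paper bounds $|K_{n-i}(x',\sd x)-\hat K_{n-i}^l(x',\sd x)|$ directly and splits the integral over $\mC_{n-i}^l$ and its complement; both yield the same constant structure $C(\zeta,\|g\|_\infty)\sum_i r_i$.
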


See Appendix \ref{apUniformDTV-1} for a proof. Note that the supremum in \eqref{eqThUnifDTV-1} is taken over all real and measurable test functions $f:\mX\mapsto\Real$ such that $\|f\|_\infty \le 1$.

\begin{theorem} \label{thUniformDTV-2}
Let $\mS=(\pi_0,K,g)$ be a SSM for which Assumption \ref{assModel2} holds. Let $\{\mC^l\}_{l\ge 1}$ be a sequence of constraints of the state space $\mX$ that satisfies Assumption \ref{assCl-2}, and let $\hat \mS^l=(\hat \pi_0^l,K^l,g)$, $l \ge 1$, be the resulting sequence of constrained SSMs. If, in addition, the likelihoods $\{g_n\}_{n\ge 1}$ are continuous, $\lim_{\|x\|\to\infty} g_n(x)=0$ for all $n$ and every $\mC_n^l \subseteq \mX$ is a superlevel set of $g_n$, then 
\beq
\sup_{|f|\le 1} \left| \pi_n(f) - \hat \pi_n^l(f) \right| \le c \frac{1-\varepsilon^l}{\varepsilon^l}
\label{eqThUnifDTV-2}
\eeq
for some constant $c<\infty$ independent of $n$. In particular, $\lim_{l\to\infty} \sup_{n \ge 0} \dtv{\pi_n-\hat\pi_n^l}=0$.
\end{theorem}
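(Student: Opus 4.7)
The plan is to follow the proof of Theorem \ref{thUniformDTV-1} with a different derivation of the one-step error: the uniform bound $\inf_{n,x} g_n(x) > \zeta$ will be replaced by an argument that exploits the superlevel-set structure $\mC_n^l = \{g_n \geq \zeta_n^l\}$. First I would write the standard telescoping decomposition. Setting $\hat\mu_k := \hat\pi_k^l$, which is supported on $\mC_k^l$ by construction of $\hat\Phi_k^l$, one has
\[
\pi_n - \hat\pi_n^l = \bigl[\Phi_{0:n}(\pi_0) - \Phi_{0:n}(\hat\pi_0^l)\bigr] + \sum_{k=0}^{n-1}\Bigl[\Phi_{k+1:n}\bigl(\Phi_{k+1}(\hat\mu_k)\bigr) - \Phi_{k+1:n}\bigl(\hat\Phi_{k+1}^l(\hat\mu_k)\bigr)\Bigr].
\]
Combining this with the contraction of Assumption \ref{assModel2} and the elementary bound $\dtv{\pi_0 - \hat\pi_0^l} \leq 1-\pi_0(\mC_0^l) \leq 1-\varepsilon^l$ reduces the task to controlling the one-step error $\dtv{\Phi_{k+1}(\hat\mu_k) - \hat\Phi_{k+1}^l(\hat\mu_k)}$ uniformly in $k$.

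For the one-step estimate I would decompose the unconstrained predictive measure as $\xi := K_{k+1}\hat\mu_k = p\hat\xi + (1-p)\nu$, where $p := \xi(\mC_{k+1}^l) \geq \varepsilon^l$ (by Assumption \ref{assCl-2} applied to $\hat\mu_k$, which sits inside $\mC_k^l$), $\hat\xi := \xi|_{\mC_{k+1}^l}/p$ is the constrained predictive, and $\nu := \xi|_{\mX\setminus\mC_{k+1}^l}/(1-p)$. A direct computation shows that $\Phi_{k+1}(\hat\mu_k)$ is a convex combination of $\hat\Phi_{k+1}^l(\hat\mu_k)$ and the auxiliary update $\tilde\nu(f) := \nu(fg_{k+1})/\nu(g_{k+1})$, with weights proportional to $A := p\,\hat\xi(g_{k+1})$ and $B := (1-p)\,\nu(g_{k+1})$; in particular $\dtv{\Phi_{k+1}(\hat\mu_k) - \hat\Phi_{k+1}^l(\hat\mu_k)} \leq B/A$.

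The hard part — and the only place where the superlevel-set hypothesis really matters — is bounding $B/A$ by $(1-\varepsilon^l)/\varepsilon^l$ \emph{uniformly in $k$}. Here I would exploit the fact that $\mC_{k+1}^l$ is a superlevel set of $g_{k+1}$ at threshold $\zeta_{k+1}^l$: the measure $\hat\xi$ is supported where $g_{k+1} \geq \zeta_{k+1}^l$, so $\hat\xi(g_{k+1}) \geq \zeta_{k+1}^l$, while $\nu$ lives where $g_{k+1} < \zeta_{k+1}^l$, so $\nu(g_{k+1}) \leq \zeta_{k+1}^l$. The sensitive threshold $\zeta_{k+1}^l$ therefore cancels from the ratio, and only the mass ratio $(1-p)/p \leq (1-\varepsilon^l)/\varepsilon^l$ survives. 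This cancellation is the crux: for a generic compact $\mC_n^l$ one would be left with a residual factor $1/\zeta_{k+1}^l$ that is neither uniform in $k$ nor vanishing in $l$, which would destroy time-uniformity.

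Finally, I would substitute the one-step estimate into the telescoping bound; using $r_0=1$ and the summability $\sum_{j\geq 0} r_j < \infty$ provided by Assumption \ref{assModel2}, the sum collapses to a finite constant $c := \sum_{j\geq 0} r_j$, yielding \eqref{eqThUnifDTV-2}. Time-uniform convergence $\sup_{n\geq 0}\dtv{\pi_n - \hat\pi_n^l} \to 0$ as $l\to\infty$ then follows immediately from $\varepsilon^l \to 1$.
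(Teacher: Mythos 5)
Your overall route is the same as the paper's: telescope through the unconstrained operators, apply the contraction of Assumption \ref{assModel2} to reduce everything to a uniform one-step bound, and then use the superlevel-set structure to cancel the threshold $\zeta_{k+1}^l$ from the ratio of likelihood integrals inside and outside $\mC_{k+1}^l$. That cancellation — $\sup_{x\in\overline{\mC_{k+1}^l}} g_{k+1}(x) \le \zeta_{k+1}^l \le \inf_{x\in\mC_{k+1}^l} g_{k+1}(x)$ — is exactly the crux of the paper's argument as well, and you have correctly identified why a generic compact would leave a non-uniform residual factor.

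There is, however, one genuine gap in your one-step estimate. You decompose $\xi := K_{k+1}\hat\mu_k$ as $p\,\hat\xi + (1-p)\nu$ with $\hat\xi := \xi|_{\mC_{k+1}^l}/p$, and claim that the resulting convex combination expresses $\Phi_{k+1}(\hat\mu_k)$ in terms of $\hat\Phi_{k+1}^l(\hat\mu_k)$ and $\tilde\nu$. But the measure $\hat\xi(fg_{k+1})/\hat\xi(g_{k+1})$ appearing in that identity is \emph{not} $\hat\Phi_{k+1}^l(\hat\mu_k)$. The constrained kernel of Eq.~\eqref{eqBasicConstraint} normalises \emph{per starting point}, i.e.
\[
\hat K_{k+1}^l\hat\mu_k(\sd x) = \int \frac{\Ind_{\mC_{k+1}^l}(x)\,K_{k+1}(x',\sd x)}{\int_{\mC_{k+1}^l} K_{k+1}(x',\sd \bar x)}\,\hat\mu_k(\sd x'),
\]
whereas your $\hat\xi$ restricts $K_{k+1}\hat\mu_k$ to $\mC_{k+1}^l$ and renormalises by the single global constant $p$. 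These coincide only when $\int_{\mC_{k+1}^l} K_{k+1}(x',\sd \bar x)$ is independent of $x'$. Consequently your bound $B/A$ controls the distance from $\Phi_{k+1}(\hat\mu_k)$ to the globally-renormalised restriction, not to $\hat\Phi_{k+1}^l(\hat\mu_k)=\hat\mu_{k+1}$, which is what the telescoping sum requires. The gap is repairable: since $\varepsilon^l \le \int_{\mC_{k+1}^l} K_{k+1}(x',\sd\bar x)\le 1$ for all $x'\in\mC_k^l$ by Assumption \ref{assCl-2}, the discrepancy between the two normalisations contributes an additional term of order $(1-\varepsilon^l)/(\varepsilon^l)^2$, uniformly in $k$, which preserves the conclusion. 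The paper avoids the issue altogether by bounding $|K_{k+1}(x',\sd x)-\hat K_{k+1}^l(x',\sd x)|$ pointwise in $x'$ before integrating against $\hat\mu_k$ (Eqs.~\eqref{eqAE-2}--\eqref{eqAE-4}); you should either adopt that pointwise route or add the extra triangle-inequality step explicitly.
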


See Appendix \ref{apUniformDTV-2} for a proof.

We emphasize that the (identical) inequalities \eqref{eqThUnifDTV-1} and \eqref{eqThUnifDTV-2} are obtained under different assumptions on the likelihoods $g_n$. In Theorem \ref{thUniformDTV-1}, we assume that the likelihood $g_n(x)$ is bounded away from zero for all $x \in \mX$, while in Theorem \ref{thUniformDTV-2}, however, the assumption is that $g_n(x) \to 0$ when $\| x \| \to \infty$. If the observations have the form $Y_n=m(X_n)+U_n$, where $m(\cdot)$ is an observation function and $U_n$ is Gaussian noise, then a bounded function $m(\cdot)$ yields $g_n > \zeta >0$ (and we can apply Theorem \ref{eqThUnifDTV-1}) whereas an unbounded $m(\cdot)$ typically yields the assumption in Theorem \ref{eqThUnifDTV-2}.

We also note that, for many models, Assumption \ref{assCl-2} (which is used in both Theorems \ref{eqThUnifDTV-1} and \ref{eqThUnifDTV-2}) can be hard to reconcile with the the requirement that the compact sets $\mC_n^l$ in the constraint be all of them superlevel sets of their respective likelihoods $g_n$. However, the latter condition is appealing because it yields a practical guideline for the construction of constraints and avoids the need for a uniform positive lower bound on the likelihoods (a condition that does not hold for many models). If we impose the weaker Assumption \ref{assCl-2} on the family of constraints $\{\mC^l\}_{l\ge 1}$, then the error bounds are not uniform over time any more, however we can still have the following result.

\begin{theorem} \label{thDTV-3}
Let $\mS=(\pi_0,K,g)$ be a SSM for which Assumption \ref{assModel2} holds. Let $\{\mC^l\}_{l\ge 1}$ be a sequence of constraints of the state space $\mX$ that satisfies Assumption \ref{assCl-1}, and let $\hat \mS^l=(\hat \pi_0^l,K^l,g)$, $l \ge 1$, be the resulting sequence of constrained SSMs. If, in addition, the likelihoods $\{g_n\}_{n\ge 1}$ are continuous, $\lim_{\|x\|\to\infty} g_n = 0$ for all $n$ and every $\mC_n^l \subseteq \mX$ is a superlevel set of $g_n$, then 
\beq
\sup_{|f|\le 1} \left| \pi_n(f) - \hat \pi_n^l(f) \right| \le 2 \sum_{i=0}^n r_{n-i} \frac{1-\varepsilon_i^l}{\varepsilon_i^l}.
\nn
\eeq
\end{theorem}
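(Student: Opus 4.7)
The plan is to reproduce the telescoping-plus-stability argument used to prove Theorem \ref{thUniformDTV-2}, but retain the time-indexed constants $\varepsilon_i^l$ provided by Assumption \ref{assCl-1} rather than collapsing them to a uniform $\varepsilon^l$. Under this weaker hypothesis the per-step errors can no longer be factored out of the sum, so the resulting estimate takes the form of a discrete convolution of the contraction rates $r_{n-i}$ with the local errors $(1-\varepsilon_i^l)/\varepsilon_i^l$, and the time-uniform control of Theorem \ref{thUniformDTV-2} is lost.

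Concretely, I introduce the interpolating measures $\mu_n^{(-1)} := \pi_n = \Phi_{0:n}(\pi_0)$ and $\mu_n^{(i)} := \Phi_{i:n}(\hat\pi_i^l)$ for $0\le i\le n$, so that $\mu_n^{(n)} = \hat\pi_n^l$, and write
$$
\pi_n - \hat\pi_n^l = \sum_{i=0}^n \bigl(\mu_n^{(i-1)} - \mu_n^{(i)}\bigr).
$$
For $i\ge 1$ one has $\mu_n^{(i-1)} = \Phi_{i:n}\bigl(\Phi_i(\hat\pi_{i-1}^l)\bigr)$ and $\mu_n^{(i)} = \Phi_{i:n}\bigl(\hat\Phi_i^l(\hat\pi_{i-1}^l)\bigr)$, so Assumption \ref{assModel2} yields
$$
\dtv{\mu_n^{(i-1)} - \mu_n^{(i)}} \le r_{n-i}\,\dtv{\Phi_i(\hat\pi_{i-1}^l) - \hat\Phi_i^l(\hat\pi_{i-1}^l)},
$$
while the term $i=0$ is handled analogously and contributes at most $r_n\,\dtv{\pi_0 - \hat\pi_0^l} \le r_n(1-\varepsilon_0^l) \le 2r_n(1-\varepsilon_0^l)/\varepsilon_0^l$.

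It then suffices to bound the local perturbation $\dtv{\Phi_i(\hat\pi_{i-1}^l) - \hat\Phi_i^l(\hat\pi_{i-1}^l)}$ by $2(1-\varepsilon_i^l)/\varepsilon_i^l$ for every $i\ge 1$; this is essentially the same bound established in the proof of Theorem \ref{thUniformDTV-2} in Appendix \ref{apUniformDTV-2}. Letting $\alpha := K_i\hat\pi_{i-1}^l$ and expressing $\Phi_i(\hat\pi_{i-1}^l)(f) = \alpha(fg_i)/\alpha(g_i)$, one splits each integral into its contributions from $\mC_i^l$ and from $(\mC_i^l)^c$. The superlevel-set property $g_i < \zeta_i^l$ on $(\mC_i^l)^c$, together with $\alpha((\mC_i^l)^c) \le 1-\varepsilon_i^l$ supplied by Assumption \ref{assCl-1}, bounds the outside contributions by $\zeta_i^l(1-\varepsilon_i^l)$, while the denominator is at least $\zeta_i^l\varepsilon_i^l$; the two factors of $\zeta_i^l$ cancel and leave the clean $1/\varepsilon_i^l$. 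The reweighting factor $1/K_i(x',\mC_i^l) \in [1,1/\varepsilon_i^l]$ that distinguishes $\hat\Phi_i^l(\hat\pi_{i-1}^l)$ from the naive restriction of $\Phi_i(\hat\pi_{i-1}^l)$ to $\mC_i^l$ is absorbed by the same $\varepsilon_i^l$-dependent bounds. Summing in $i$ delivers the claimed estimate $2\sum_{i=0}^n r_{n-i}(1-\varepsilon_i^l)/\varepsilon_i^l$.

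The hard part is precisely the local cancellation of $\zeta_i^l$: because $\zeta_i^l \to 0$ as $l\to\infty$, a naive total-variation Lipschitz bound for the Bayes update would introduce an uncontrollable factor $1/\zeta_i^l$, and the argument only closes because the superlevel-set structure of $\mC_i^l$ causes $\zeta_i^l$ to appear identically in the numerator bound and the denominator lower bound, where it cancels.
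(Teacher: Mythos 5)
Your argument is correct and is essentially the paper's own proof: the interpolating measures $\mu_n^{(i)}$ reproduce the telescoping decomposition of Appendix~\ref{apUniformDTV-1}, Assumption~\ref{assModel2} supplies the factors $r_{n-i}$, and the local one-step bound $2(1-\varepsilon_i^l)/\varepsilon_i^l$ via the superlevel-set cancellation of $\zeta_i^l$ is exactly the mechanism of Appendix~\ref{apUniformDTV-2}, here retained with time-dependent constants as in Appendix~\ref{apDTV-3}. The only (harmless) deviation is your slightly sharper bound $\dtv{\pi_0-\hat\pi_0^l}\le 1-\varepsilon_0^l$ for the initial term, which you then relax to the paper's form.
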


See Appendix \ref{apDTV-3} for a proof. 

\begin{remark} \label{rmDTV-3}
Note that $r_n \le r_0 = 1$, hence this bound is, in general, tighter than the bound in Theorem \ref{thConsistencyCl}, where the constants $c_i$ increase exponentially with index $i$. Moreover, if we restrict attention to a finite subsequence $\pi_0, \ldots, \pi_{n_0}$ for some finite $n_0$ then we still have a uniform bound $\dtv{\pi_n - \hat \pi_n^l} \le s_0\frac{1-\epsilon_0^l}{\epsilon_0^l}$, where $s_0 = 2\sum_{i=0}^{n_0} r_{n-i} < \infty$ and $\epsilon_0^l = \inf_{0 \le i \le n_0} \varepsilon_i^l$, where $\epsilon_0^l \to 1$ as $l\to\infty$.
\end{remark}

%
\section{Continuous-discrete filtering} \label{sSDEs}



%
\subsection{State space model}

Let us assume that the signal of interest is a $\dx$-dimensional It\^o process $\X(t)$ in $(\Omega,\mF,\mbP)$, described by the stochastic differential equation (SDE) 
\beq\label{eq:sde}
\sd \X(t) = a(\X(t),t) \sd t + \sigma(\X(t),t) \sd \W(t),  
\quad t \in [0,T],
\eeq
where $t$ denotes continuous time, $T$ is some finite time horizon, $a:\mX \times [0,T] \mapsto \mX$ is the drift function, $\W(t)$ is a standard $d_w$-dimensional Wiener process and $\sigma:\mX \times [0,T] \mapsto \Real^{\dx \times d_w}$ is the diffusion (matrix) coefficient. For a given time grid 
\beq
0 = t_0 < t_1 < \cdots < t_n < \cdots < t_M = T,
\label{eqGrid}
\eeq
the discrete-time state sequence is $X_n := \X(t_n)$, $n \ge 0$, and the Markov kernels are given by 
\beq
K_n(x',\sd x) := \mbP\left( \X(t_n) \in \sd x | \X(t_{n-1})=x' \right), \quad \text{for $n \ge 1$}.
\nn
\eeq

One can generate approximate samples from the kernel $K_n(x',\sd x)$ using some numerical scheme for the SDE \eqref{eq:sde} on a possibly refined time grid 
\beq
t_{n,0}' = t_{n-1} < t_{n,1}' < \ldots < t_{n,j}' < \cdots < t_{n,J}' = t_n,
\nn
\eeq
with initial condition $X_{n-1} = x'$. For example, a standard Euler-Maruyama scheme yields
\beqa
\bar X_{n,0} &=& x', \quad \text{and}\nn\\
\bar X_{n,j} &=& \bar X_{n,j-1} + \Delta_{n,j} a(\bar X_{n,j-1},t_{n,j-1}') + \sqrt{\Delta_{n,j}} \sigma(\bar X_{n,j-1},t_{n,j-1}') Z_{n,j}, \quad \text{with~~} Z_{n,j} \sim \mN(0,I), \nn
\eeqa
for $j = 1, \ldots, J$ and $\Delta_{n,j} = t_{n,j}' - t_{n,j-1}'$. The last iterate, $\bar X_{n,J}$, is an approximate sample from the kernel $K_n(x',\sd x)$, hence $X_n \approx \bar X_{n,J}$.

The time grid \eqref{eqGrid} corresponds to the time instants where observations are collected and a likelihood function $g_n$ is available. We assume that the observations have the form \beq 
\label{eq: ObsModel}
Y_n = m(X_n) + U_n, \quad n=1, \ldots, M,
\eeq
where $U_n$ is a sequence of zero-mean, i.i.d., $\dy$-dimensional real r.v.s,  and $m: \mX \mapsto \mY$ is the measurement or observation function that maps the state space $\mX \subseteq \Real^{\dx}$ into the observation space $\mY \subseteq \Real^{\dy}$. The form of the likelihood $g_n(x)$ depends on the probability law of the noise variables $U_n$ and the observed value $Y_n= y_n$. For example, if the noise is zero-mean Gaussian with covariance matrix $\Sigma_u$, denoted $U_n \sim \mN(0,\Sigma_u)$, then we let
\beq
g_n(x) := \exp\left\{ 
    -\frac{1}{2}\left\|
        y_n - m(x)
    \right\|_{\Sigma_u^{-1}}^2
\right\}
\nn
\eeq
where $\left\| y_n - m(x) \right\|_{\Sigma_u^{-1}}^2 = \left(
    y_n - m(x)
\right)^\top \Sigma_u^{-1} \left(
    y_n - m(x)
\right)
$
and ${}^\top$ denotes transposition.

%
\subsection{Constrained kernels} \label{ssDoob}

Let $\mC = \{ \mC_n \}_{n \ge 0}$ be a constraint as described in Section \ref{ssConstrainedModels}. Sampling the kernel $\hat K_n(x',dx)$ in \eqref{eqBasicConstraint} is equivalent to simulating paths of the It\^o process $\X(t)$ that evolves from $\X(t_{n-1})=x'$ up to a point in the set $\mC_n$ at time $t_n$, i.e., we simulate $\X(t_n)$ conditional on the initial condition $\X(t_{n-1})=x'$ and the terminal condition $\X(t_n) \in \mC_n$.

One straightforward approach is to simulate trajectories of the original SDE \eqref{eq:sde} and then reject the samples $\X(t_n) \notin \mC_n$. However this can be extremely inefficient as the dimension $d_x$ increases. An alternative approach is to ``guide'' the It\^o process to generate samples contained in $\mC_n$ at time $t_n$. It is well known that the Doob $h$-transform \cite{Rogers00}
\beq
h(x,t) := \mbP\left( \X(t_n) \in \mC_n \mid \X(t)=x \right),
\nn
\eeq
which satisfies the backward Kolmogorov equation
\beq
\partial_t h + a \cdot \nabla h + \frac{1}{2}\mathrm{tr}(\sigma\sigma^\top\nabla^2 h) = 0, \qquad h(x,t_n)=\Ind_{\mC_n}(x),
\nn
\eeq
can be used for this purpose. To be specific, if we let $\sigma^2(\cdot,t):=\sigma(\cdot,t)\sigma(\cdot,t)^\top$ for conciseness, then the SDE
\beq
\sd \X^n(t) = \left[
    a(\X^n(t),t) + \sigma^2(\X^n(t),t) \nabla_x \log h(\X^n(t),t)
\right] \sd t + \sigma(\X^n(t),t)\sd \W(t),
\label{eqConstrainedSDE}
\eeq
with $\X^n(t_{n-1})=x'$, describes a modified It\^o process $\X^n(t)$ for $t \in [t_{n-1}, t_n]$, and the law of $\X^n(t)$ coincides with the law of $\X(t)$ conditional on $\X(t_{n-1})=x'$ and $\X(t_n)\in\mC_n$. Unfortunately, $h(X,t)$ is  intractable except for a few particular cases (and also expensive to approximate). 

Let
\beq
S_n(t) := \left\{x\in\mX : h(x,t) \ge \exp\{-r(t)\} \right\},
\quad t_{n-1} \le t \le t_n,
\nn
\eeq
be a flow of superlevel sets of the Doob $h$-transform generated by some function $r:[t_{n-1},t_n] \mapsto (0,\infty)$, such that $\lim_{t\to t_n} r(t) = \infty$ to ensure $S_n(t_n) = \mC_n$. Given $r(t)$, there is some $\epsilon_r \ge 0$ such that
\beq
\mbP\left( \X^n(t) \in S_n(t) \right) \ge 1-\epsilon_r, \quad \forall t \in [t_{n-1}, t_n].
\nn
\eeq
Assuming $\epsilon_r<<1$, the deterministic flow $S_n(t)$ captures most of the probability mass of the constrained process $\X^n(t)$ in Eq. \eqref{eqConstrainedSDE}. Intuitively, we can interpret it as a ``high probability tube'' connecting the initial condition $\X(t_{n-1})=x'$ with the terminal set $\mC_n$. 

%
\subsection{Barrier-constrained kernels} \label{ssBarrier}

While approximating the Doob $h$-transform $h(x,t)$ is possible (several recent papers \cite{Chopin23,Pieper-Sethmacher25,Pieper-Sethmacher25b,Pieper-Sethmacher25c,Eklund25} have tackled this computation, e.g., using local linearization techniques or neural networks), we explore a simpler alternative where we construct a modified SDE that mimics the structure of \eqref{eqConstrainedSDE} but replaces $-\log h(x,t)$ by a simple-to-evaluate, differentiable ``barrier function'' $\log b(x,t)$. This barrier is designed to 
\begin{itemize}
\item penalize the continuous-time state trajectories that depart from a high probability tube denoted $\hat S_n(t) \subset \mX$, and
\item leave trajectories $\X^n(t) \in \hat S_n(t)$ unaltered. 
\end{itemize}
The tube $\hat S_n(t)$ is the deterministic flow of subsets of $\mX$ constructed by interpolating the constraint sets $\mC_{n-1}$ and $\mC_n$. In particular, $\hat S_n(t)$ is defined for $t \in [t_{n-1},t_n]$, with initial value $\hat S_n(t_{n-1}) = \mC_{n-1}$ and terminal value $\hat S_n(t_n) = \mC_n$. Again, this construction mimics the flow $S_n(t)$ of high-probability sets described in Section \ref{ssDoob} above. Specifically, the proposed barrier-constrained SDE takes the form
\beq
\sd \X^n(t) = \left[
    a(\X^n(t),t) - \mu \sigma^2(\X^n(t),t) \nabla_x \log b(\X^n(t),t)
\right] \sd t + \sigma(\X^n(t),t)\sd \W(t),
\label{eqBarrierSDE}
\eeq
where $\X^n(t_{n-1})=x'$ and $\mu>0$ is a parameter that controls the strength of the penalization when $\X^n(t) \notin \hat S_n(t)$. 

Let $\mC_n = \{ x\in\mX: \log g_n(x) \ge \log \zeta_n \}$ be a superlevel set of the likelihood $g_n$, in agreement with the assumptions of Theorem \ref{thUniformDTV-2}. This construction implies that the sets $\hat S_n(t)$ depend on the log-likelihood $g_n$, hence it is useful to make the barrier depend on this quantity as well. This is relatively simple and can be done in a number of ways. In this paper we use a shifted soft-plus function with parameters $\varrho$ and $\kappa \ge 1$,
\beq
u_{\varrho,\kappa}(z) := \frac{1}{\kappa} \log\left(
    1 + \exp\{\kappa(z-\varrho)\}
\right).
\nn
\eeq
Note that, for large $\kappa$, $u_{\varrho,\kappa}(z) \approx 0$, when $z<\varrho$, and $u_{\varrho,\kappa}(z) \approx z-\varrho$, when $z>\varrho$. The barrier function is then constructed as
\beq
\log b(x,t) := u_{\varrho,\kappa}( -\log g^n(x,t) )
\label{eqBarrierFunction}
\eeq
where $g^n(x,t)$ is an interpolated likelihood at time $t \in [t_{n-1},t_n]$, satisfying $g^n(x,t_{n-1}) = g_{n-1}(x)$ and $g^n(x,t_n)=g_n(x)$.

In Section \ref{sNumerics} below, we illustrate the construction of barrier functions for a Gaussian observation model, and study the numerical performance of some Bayesian filters that rely on the barrier-constrained SDE \eqref{eqBarrierSDE} to approximately simulate from the constrained Markov kernel $\hat K_n(x',\sd x)$.

\section{Numerical examples} \label{sNumerics}

%
\subsection{Stochastic Lorenz 96 model} 

To illustrate numerically both the analysis in Section \ref{sConstrained} and the methodology described in Section \ref{sSDEs}, we consider the $\dx$-dimensional stochastic Lorenz 96 model with additive noise \cite{Perez-Vieites18,Grudzien20} described by the system of SDEs
\beq
\sd \X_i(t) = \left[ \X_{i-1}(t) \left(  \X_{i+1}(t) - \X_{i-2}(t)  \right) - \X_i(t) + F\right] dt + \sigma_x  \sd \W_i(t), \quad i=0, \ldots, \dx-1,
\label{eqL96}
\eeq
with periodic boundaries given by 
\beq
\X_{-2}(t) = \X_{\dx-2}(t),
\quad \X_{-1}(t) = \X_{\dx-1}(t),
\quad  \text{and} \quad  \X_0(t) = \X_{\dx}(t), 
\nn
\eeq
where the forcing constant is $F=8$ (which yields turbulent dynamics), $\W_i(t)$, $i=0, \ldots, \dx-1$, are standard Wiener processes and $\sigma_x>0$ is a constant. The complete state is the $\dx \times 1$ vector $\X(t)=\left[ \X_0(t), \ldots, \X_{\dx-1}(t) \right]^\top$. 

We assume a linear and Gaussian observation model, namely 
\beq
Y_n = H^\top X_n + U_n,
\label{eqObsEq}
\eeq
where $U_n \sim \mN(0,\sigma_y^2 I_{d_y})$, $H$ is a $\dx \times \dy$ matrix and $X_n=\X(t_n)$. For each independent simulation trial, we generate $H$ randomly as
\beq
H = \left[
 	e_{m_{1}}, e_{m_{2}}, \ldots, e_{m_{\dy}} 
 \right] + V 
\nn
\eeq
where $e_{m}$ is a $\dx \times 1$ vector of 0s with a single value of 1 in the $m$-th entry, $V$ is a $\dx \times \dy$ matrix whose entries are independent Gaussian random variates, namely $V_{i,j} \sim \mN(0,\sigma_v^2)$ and $\sigma_v = 5 \times 10^{-4}$. The indices are drawn randomly from the set $\{1, \ldots, \dx\}$, with equal probabilities and no replacement. Intuitively, the $i$-th observation $Y_i$, $i\in\{1, \ldots, \dy\}$, corresponds to a linear combination of all the state variables, with $X_{m_i}$ being dominant and all other variables causing (individually small) interference in the measurement.
 
We approximate the ground-truth states $X_n$ in our experiments below by generating a discrete-time sequence $\X_l \approx \X(t_l')$, where $t_l'= l \Delta$, by way of the standard Euler-Maruyama scheme with step size $\Delta=10^{-3}$. We assume that observations are collected every $\Delta_o=0.1$ continuous time units. Hence, $X_n = \X(n\Delta_o) \approx \X_{\frac{\Delta_o}{\Delta}n} = \X_{100n}$. The Euler-Maruyama method is used in the numerical implementation of all the algorithms in this section.

%
\subsection{Barrier function}

Assume arbitrary but fixed observations $Y_n = y_n$, $n \ge 1$. The likelihood function for the observation model \eqref{eqObsEq} is
\beq
g_n(x) = \exp\left\{
    -\frac{1}{2\sigma_y^2} \| y_n - H^\top x \|^2
\right\}
\nn
\eeq
where $Y_n=y_n$ is the observation at time $t_n=n\Delta_o$. The constraint consists of superlevel sets $\mC_n = \{ x \in \mX: \log g_n(x) \ge \log \zeta \}$, where $\zeta\ge 0$ is the level. The tube $\hat S_n(t)$ is constructed as
\beq
S_n(t) := \{ x \in \mX: \log g^n(x,t) \ge \log \zeta \}, \quad t \in [t_{n-1}, t_n],
\nn
\eeq
where $t_n = n\Delta_o$ and the functions $g^n(x,t)$ are obtained by interpolating the observations $y_{n-1}$ and $y_n$. Specifically, we let 
\beq
y^n(t) := \frac{ 
    y_n-y_{n-1}
}{
    t_n-t_{n-1}
} t + \frac{
    y_{n-1}t_n - y_n t_{n-1}
}{
    t_n-t_{n-1}
}
~~\text{and}~~
g^n(x,t) := \exp\left\{
    -\frac{1}{2\sigma_y^2} \| y^n(t) - H^\top x \|^2
\right\}.
\nn
\eeq
Then, by construction, $g^n(x,t_{n-1}) = g_{n-1}(x)$, $g^n(x,t_{n}) = g_{n}(x)$, $\hat S_n(t_{n-1}) = \mC_{n-1}$ and $\hat S_n(t_n)= \mC_n$. 

Moreover, we note that
\beq
\hat S_n(t) = \{ x\in\mX: \log g^n(x,t) \ge \log \zeta \} 
= \left\{ x\in \mX: \| y^n(t) - H^\top x \| \le \sqrt{-\frac{\log \zeta}{2\sigma_y^2}} \right\},
\nn
\eeq
hence, $\hat S_n(t)$ is the preimage (on the state space $\mX$) of the ball $\{ y \in \mY: \| y^n(t) - y \| \le R \}$, with radius $R=\sqrt{-\frac{\log \zeta}{2\sigma_y^2}}$ (on the observation space $\mY$). The barrier function in \eqref{eqBarrierFunction} becomes
\beqa
\log b(x,t) &=& u_{\varrho,\kappa}\left( \frac{1}{2\sigma_y^2} \| y^n(t) - H^\top x \|^2 \right) \nn\\
&=& \frac{1}{\kappa} \log\left(
    1 + \exp\left\{\kappa\left( \frac{1}{2\sigma_y^2} \| y^n(t) - H^\top x \|^2 -\varrho \right)\right\}
\right),
\nn
\eeqa
where $\varrho = -\log \zeta$, and
\beq
\nabla_x \log b(x,t) = \frac{1}{\sigma_y^2} s\left(
\kappa\left( \frac{1}{2\sigma_y^2} \| y^n(t) - H^\top x \|^2 -\varrho \right)
\right) H (H^\top x - y^n(t)),
\eeq
where $s(z)=\frac{1}{1+\exp\{-z\}}$ is the logistic function.

%
\subsection{Stability of constrained filters}

Let $\mS^A=(\pi_0^A,K,g)$ and $\mS^B=(\pi_0^B,K,g)$ be two SSMs that differ only on their initial distribution. To assess the stability of the PU operators $\Phi_n$ (identical for the two models, as $\Phi_n$ depends only on $K_n$ and $g_n$) we ideally need to evaluate the total variation distance $\dtv{\pi_n^A-\pi_n^B}$, where $\pi_n^A$ and $\pi_n^B$ are the filtering distributions at time $n$ generated by model $\mS^A$ and model $\mS^B$, respectively. Unfortunately, this distance is intractable, hence we compute a proxy for $\dtv{\pi_n^A-\pi_n^B}$ by applying a simple discretisation procedure.

Specifically, at the time $t_n$ of the $n$-th observation, let us construct the hypercube 
\beq
\mH_n = \{ x \in \mX : \| x - X_n \|_\infty \le r \},
\nn
\eeq
centered at the true state vector $X_n$, with semi-length $r > 0$. This hypercube is then uniformly partitioned into $L_{\rm sub} := {\left\lfloor \frac{r}{r_{\text{sub}} } \right\rfloor^{\dx}}$ smaller sub-hypercubes $\mH_{n,l}$, $l=1,\ldots,L_{\rm sub}$, each of semi-length $r_{\rm sub}$. We note that $\mH_n = \cup_{l=1}^{L_{\rm sub}} \mH_{n,l}$ and $\mH_{n,l} \cap \mH_{n,j} = \emptyset$ when $l \ne j$. Given two measures $\alpha_n, \beta_n \in \mP(\mX)$, their total variation distance is approximated as
\beq \label{eq: dtv_formula}
\dtv {\alpha_n-\beta_n}
\approx
\frac{1}{2} \sum_{i} \big| \alpha_n (\mathcal{H}_{n,i}) - \beta_n (\mathcal{H}_{n,i}) \big|.
\eeq

From the SSMs $\mS^A$ and $\mS^B$ we construct constrained models $\hat \mS^A$ and $\hat \mS^B$. The constrained kernels $\hat K_n$ are implemented using the barrier method of Section \ref{ssBarrier} with parameters $\varrho=2$, $\kappa = 1$ and $\mu=10$. We simulate the Lorenz 96 model with dimension $\dx=10$, with $\dy=6$ observations at every time $t_n$, and noise scale parameters $\sigma_x^2 = \frac{1}{2}$ and $\sigma_y^2=\frac{1}{4}$. The discretisation of the total variation distance is carried out with $r=6$ and $r_{\rm sub}=3$. The filters $\hat\pi_n^A$ and $\hat\pi_n^B$ are approximated via a standard PF on the constrained models using $N$ particles, and we denote the Monte Carlo approximate filters as $\hat\pi_n^{A,N}$ and $\hat\pi_n^{B,N}$, respectively.

Figure \ref{fig:twopanels} illustrates the exponential stability predicted by Theorem \ref{thStabilitySl}. Specifically, in Figure \ref{fig: Task1}, we display the (approximate) total variation distances $\dtv{\pi_n^{A,N_1} - \pi_n^{B,N_1}}$ and $\dtv{\pi_n^{A,N_2} - \pi_n^{B,N_2}}$, obtained with $N_1=10,240$ and $N_2=1,024,000$ particles, averaged over 128 independent simulation runs. The prior distributions for this experiment are almost mutually singular, namely $\pi_0^A = \mathcal{N}(X_0-1,\dfrac{1}{4}I_{dx})$ and $\pi_0^B = \mathcal{N}(X_0+1,4I_{dx})$. The total variation distances are plotted versus the observation index (discrete time $n$). We see that the distance between the constrained filters with different priors decays exponentially fast, until it hits a Monte Carlo error floor, which is lower as we increase the number of particles $N$.

In Figure \ref{fig: Decay} we show a numerical estimate of the bound in \eqref{eqStabHatS}, which is obtained by a nonlinear least squares fit of the constant $\gamma$ (namely, $\hat \gamma \approx 0.496$). We see that the bound holds until the total variation curve approaches the Monte Carlo error floor.

\begin{figure}[h]
\centering
\subfloat[][{}]{%
  \parbox[t]{0.48\textwidth}{\centering
    \includegraphics[height=0.58\linewidth,keepaspectratio]{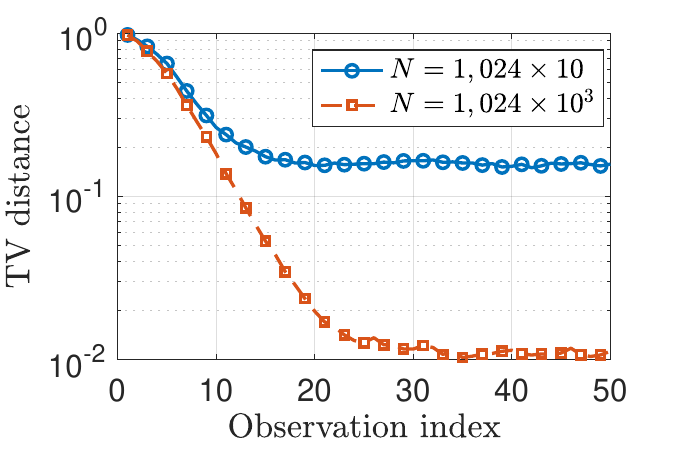}}%
  \label{fig: Task1}}
\hfill
\subfloat[][{}]{%
  \parbox[t]{0.48\textwidth}{\centering
    \includegraphics[height=0.58\linewidth,keepaspectratio]{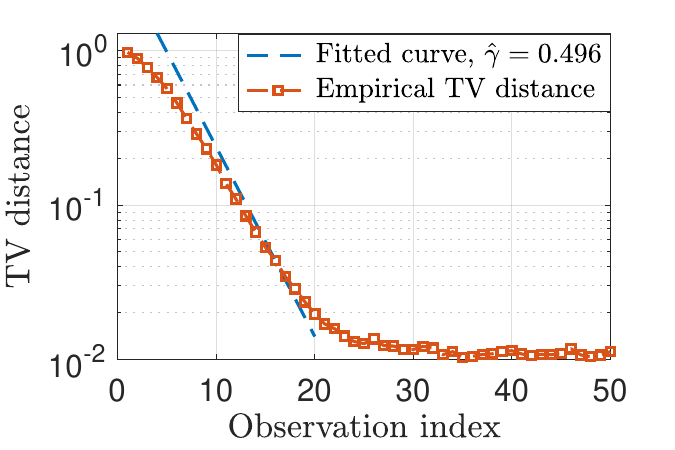}}%
  \label{fig: Decay}}
\caption{(a) The distances $\dtv{\hat\pi_n^{A,N_1} - \hat\pi_n^{B,N_1}}$ and $\dtv{\hat\pi_n^{A,N_2} - \hat\pi_n^{B,N_2}}$ versus observation index $n$, with number of particles $N_1=10,240$ and $N_2=1,024,000$ and barrier parameter $\varrho=2$. 
(b) The distance  $\dtv{\hat \pi_n^{A,N_2} - \hat \pi_n^{B,N_2}}$, barrier parameter $\varrho=2$ and the nonlinear fit $\frac{(1-\hat\gamma^2)^n}{\hat\gamma^2}\dtv{\hat\pi_0^A -\hat\pi_0^B}$, with $\hat\gamma = 0.496$, of the contraction bound in \eqref{eqStabHatS}.}
\label{fig:twopanels}
\end{figure}

Next, let us take a sequence of three constraints $\mC^l$, $l=1, 2, 3$, which differ in the choice of the radius parameter $\varrho$ in the barrier function. In particular, we take  $\{\varrho_1, \varrho_2, \varrho_3 \} = \{0.5, ~~2, ~~4 \}$ and fix
 $\kappa = 1$ and $\mu=10$.
We then repeat the same experiment as in Figure \ref{fig:twopanels}, to obtain the decay of the total variation distance versus discrete time $n$.

The results are displayed in Figure \ref{fig: Task3}. It illustrates how the volume of the constraint sets $\mC_n^l$, determined by the barrier radius $\pl$ affects the stability  of the constrained filter. 
As $\varrho_l$ decreases, the sets $\mC_n^l$ become smaller, leading to a stronger contraction of the posterior measures and, hence, a faster decay of the total variation distance. The fitted decay rates are  $\{\hat \gamma_1, \hat \gamma_2, \hat \gamma_3\} = \{0.518, 0.505, 0.487\}$, where larger $\gamma$ yields a faster decay.

%
\subsection{Volume of constraint sets}

In these experiments we study the effect the volume of the constraint sets on the accuracy of the barrier-constrained filters. Similar to the previous experiments we take a stochastic Lorenz 96 model with state dimension $\dx=10$, $\dy=6$ observations at every time $t_n$, and noise scale parameters $\sigma_x^2 = \frac{1}{2}$ and $\sigma_y^2=\frac{1}{4}$. The barrier method is implemented with parameters $\kappa = 1$ and $\mu=10$, and radii $\{\varrho_1,\varrho_2,\varrho_3,\varrho_4\} = \{ 0.5, 1, 2, 4\}$. We note again that the radius $\varrho_l$ determines the volume of the sets $\mC_n^l$. With this setup, we run a standard PF with $N=50,000$ particles for reference, to approximate an ``optimal'' posterior mean estimator, a standard PF with just $N=500$ particles, and several barrier-constrained PFs (for $\varrho_1, \ldots, \varrho_4$) with $N=500$ particles.  

Figure \ref{fig: Task2} displays the results in terms of the normalized mean square error (NMSE) between between the posterior mean estimate $\hat x_n^N = \frac{1}{N}\sum_{i=1}^N x_n^i$ output by each filter and the true state $X_n$, namely ${\rm NMSE}_n = \frac{M\|X_n-\hat x_n^N\|^2}{\sum_{m=1}^M \| X_m \|^2}$, where $M$ is the number of observation times. The standard PF with $N=500$ can be expected to lose track of the system state after less than 20 discrete time steps (for $n < 20$) while the barrier-constrained filters perform close to the optimal posterior mean estimator for the complete simulation, with $50$ discrete time steps (the results are the same for longer simulations). The NMSE curves have been averaged over 128 independent simulation runs.

\begin{figure}[h]
\centering
\subfloat[][{}]{%
  \parbox[t]{0.48\textwidth}{\centering
    \includegraphics[height=0.58\linewidth,keepaspectratio]{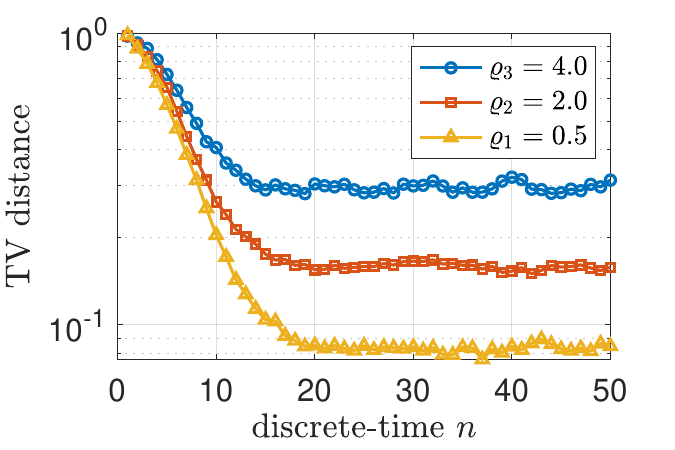}}
  \label{fig: Task3}}
\hfill
\subfloat[][{}]{%
  \parbox[t]{0.48\textwidth}{\centering
    \includegraphics[height=0.58\linewidth,keepaspectratio]{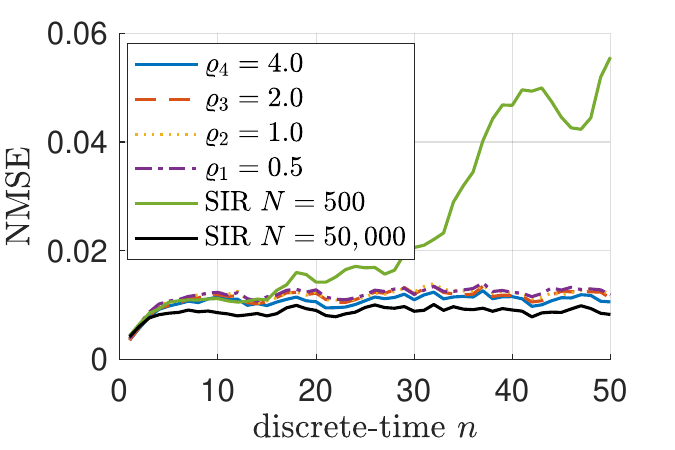}}
   \label{fig: Task2}}
\caption{(a) The (discretised) distance $\dtv{\pi_k^{A,N} - \pi_k^{B,N}}$ computed using standard PFs with $N=10,240$ particles and barrier parameters $\{\varrho_1, \varrho_2, \varrho_3\} = \{ 0.5, 2, 4 \}$. (b) NMSE versus observation index (discrete time $n$) for the standard PF (with $N=500$ and $N=50,00$ particles) and barrier-constrained PFs with different constraints parametrized by the barrier radii $\{\varrho_1,\varrho_2,\varrho_3,\varrho_4\} = \{0.5, 1, 2, 4\}$.}
\end{figure}
 

%
\subsection{Higher dimensional models}

In this final experiment we assess the accuracy of the barrier-constrained methods as the dimension of the state ($\dx$) and the observations ($\dy$) increase. Otherwise, the stochastic Lorenz 96 model parameters are $\sigma_x=\sigma_y=1$ and the continuous-time horizon is $T=10$, with observations collected every $\Delta_o=0.1$ continuous time units. For the barrier-based filters, we set $\mu=50$, $\kappa=100$ and $\varrho=4$. The schemes are rather robust to the choice of $\kappa$, however $\mu$ needs to be increased for moderate and large $d_x$.

To be specific, we compare the performance of 7 types of filters:
\begin{itemize}
\item a standard PF (labeled SIR),
\item an auxiliary particle filter (APF) \cite{Pitt01},
\item an ensemble Kalman filter (EnKF) \cite{Evensen03} implemented as in \cite{Perez-Vieites18},
\item a barrier-constrained standard PF (Barrier-SIR)
\item a barrier-constrained EnKF (Barrier-EnKF)
\item a guided PF (Guided) implemented as in \cite{Pieper-Sethmacher25b}, and
\item a barrier-constrained (Barrier-Guided) that incorporates a barrier term to the guided PF of \cite{Pieper-Sethmacher25b}.
\end{itemize}
For the guided PF we employ Algorithm 1 in \cite{Pieper-Sethmacher25b}, with the auxiliary drift chosen as $a(t)=0$. Under this choice, the simplified auxiliary dynamics reduces to a linear Ornstein–Uhlenbeck process without deterministic forcing. Although alternative choices of 
$a(t)$, such as frozen or filtered-mean linearizations, are possible and may further improve performance, we restrict attention here to the zero-drift auxiliary model in order to provide a clear baseline assessment of the method. 
The barrier-guided PF is constructed by combining the guidance mechanism with the barrier correction. Concretely, the proposal drift is obtained by augmenting the original Lorenz–96 drift with the guidance term and subsequently adding the barrier-induced correction term. The weight update remains that of Algorithm 1, so that the barrier modification acts only at the level of the proposal dynamics. This construction enables us to examine the effects of guidance and barrier stabilization in high-dimensional settings.

In Figure \ref{fig:MSE_fixed} we plot the NMSE attained by each filter versus the state dimension $\dx$ for a fixed number of particles (or ensemble size for the EnKFs) $N=750$. The observation dimension is $\dy = \lfloor 0.6\dx \rfloor$ and the observation matrix $H$ is fixed for each simulation but chosen randomly across different simulations. The results displayed are an average of 128 independent simulation runs.

We observe that the standard PFs break down quickly as the dimension increases and yields NMSEs over 1, which means that the power of the error is of the same order as the power of the signal $X_n$ itself.The guided PF improves upon the standard PF but still deteriorates as the dimension $d_x$ grows, with NMSEs remaining significantly above those of the barrier-based methods. The performance of the standard EnKF degrades quickly when $\dx > N$, while both the barrier-constrained PF and the barrier-constrained EnKF yield stable NMSEs as $\dx$ increases. The barrier-guided PF further improves the guided method and achieves consistently lower NMSEs than the simpler guided PF. The NMSE attained by the barrier-EnKF filter is around half the error of the barrier-SIR filter. 

Figure \ref{fig:MSE_varied} shows the results of a similar experiment where the number of particles for the different algorithms is increased linearly with the model dimension, namely $N =\lfloor 0.75  \dx \rfloor$. The simulation shows that the standard PFs still fail to track the stochastic Lorenz 96 model, while the barrier-based algorithms remain stable and improve their performance. The guided PF improves upon the standard PF under this scaling but still exhibits a gradual increase in NMSE as the dimension grows. The barrier-guided filter further improves the performance of the guided PF and maintains stable NMSE levels across dimensions. The barrier-EnKF filter yields an NMSE which is consistently lower than the NMSE of the standard EnKF for $\dx \ge 40$, with a very similar computational cost. 

Finally in Table~\ref{tab:largerdx} we illustrate that the three barrier-based methods we consider in Figure~\ref{fig:MSE_fixed} are applicable to larger dimensions. Here computations are made with a fixed sample size of $N=750$.

\begin{figure}[h]
\centering

\subfloat[][{}]{%
  \parbox[t]{0.49\textwidth}{\centering
    \includegraphics[width=0.8\linewidth,keepaspectratio]{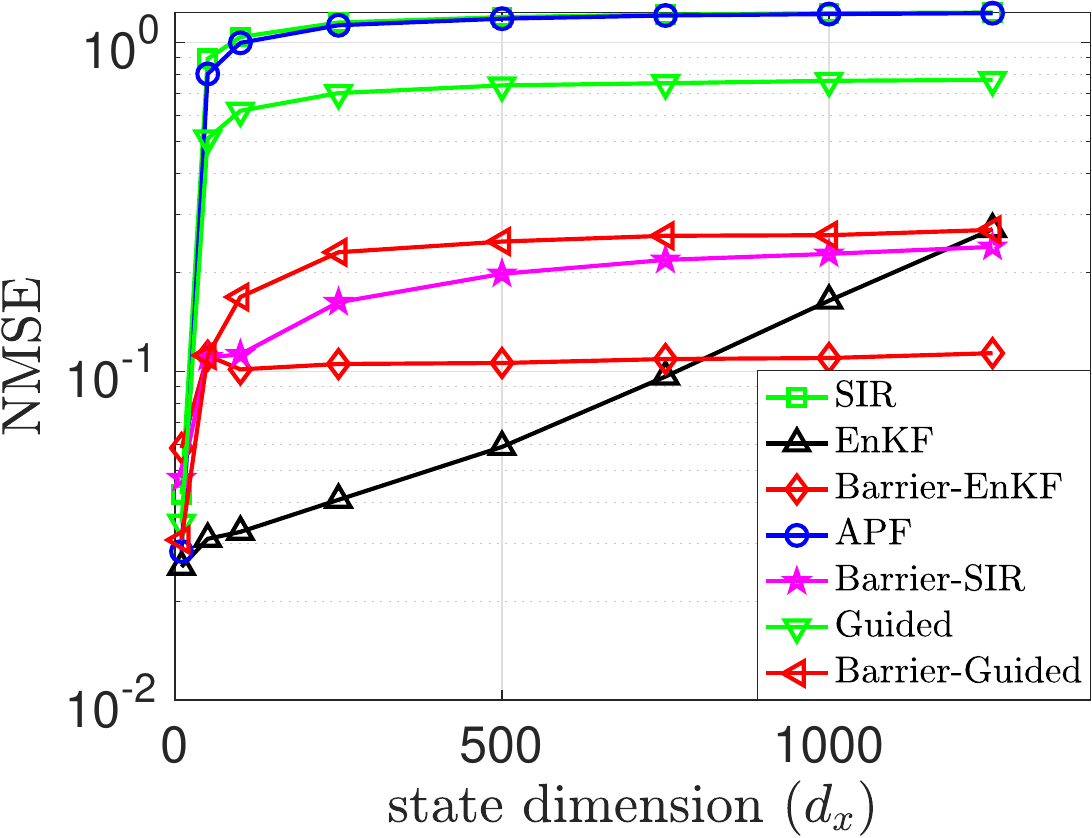}}%
  \label{fig:MSE_fixed}}
\hfill
\subfloat[][{}]{%
  \parbox[t]{0.49\textwidth}{\centering
    \includegraphics[width=0.8\linewidth,keepaspectratio]{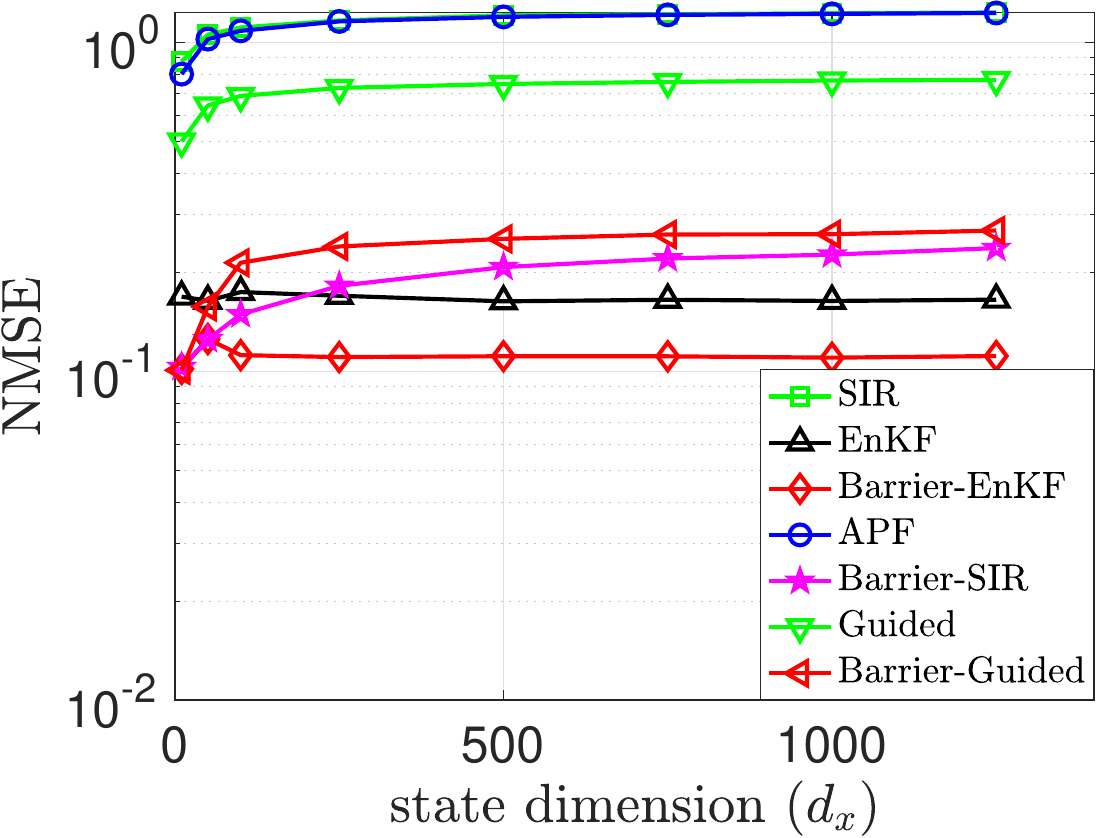}}%
  \label{fig:MSE_varied}}
\caption{NMSE versus state-space dimension $\dx$ for SIR, EnKF, APF, barrier-SIR, and barrier-EnKF algorithms. (a) All filters run with a fixed sample size $N=750$. (b) Sample size is scaled with dimension as $N=\lfloor 0.75 \dx \rfloor$.}
\label{fig:MSE}
\end{figure}

\begin{table}[]
\begin{center}
\begin{tabular}{ r | c c c }
    $d_x$       & 2500     & 5000     & 7500   \\ \hline \hline
 Barrier-SIR    & 0.2560  & 0.2705   & 0.1604    \\  
 Barrier-ENKF   & 0.1214  & 0.1485  & 0.2757    \\ 
 Barrier-Guided & 0.2770  & 0.2762  & 0.2797    
\end{tabular}
\end{center}
\caption{NMSE for the three Barrier based methods for $\dx=2500, 5000, 7500$. The results are averaged over four independent runs.}
\label{tab:largerdx}
\end{table}

%
\section{Conclusions} \label{sConclusions}

We have investigated a class of constrained Bayesian filters for state space models where the source and target of the Markov transition kernels are compact subsets of the state space constructed from the observed data. In this setup, we have analyzed the stability and approximation errors of Bayesian filters. To be specific, we have proved that, under mild assumptions on the original (unconstrained) state-space model, the constrained filter is exponentially stable. Stability is an important feature because it enables the design of numerical filters which do not accumulate error over time. We have also studied the approximation errors of constrained filters and obtained error rates that depend on a parameter of the constraint. Our analysis also provides guidelines on the selection of the sequence of compact sets that form the constraint, namely, that they should be selected as superlevel sets of the likelihood functions in the state space model. 

In the second half of the paper we have turned our attention to the practical implementation of the proposed constrained filters in problems where data are collected over a discrete time grid but the signal of interest is a continuous-time stochastic It\^o process characterised by a SDE. We have proposed a relatively simple approximation of the resulting constrained Markov kernels using barrier functions to modify the drift of the original SDE. As a result, we have obtained a class of modified SDEs that can be easily simulated and built into sampling-based numerical filters, including particle filters and ensemble Kalman filters. Based on this approach, we have shown some numerical results that illustrate the stability and accuracy of the proposed constrained filters when tracking a (possibly high dimensional) stochastic Lorenz 96 system with partial and noisy observations.

%
\section*{Acknowledgments}

JM thanks the mathematical research institute MATRIX, in Australia, where part of this research was performed.

%
\appendix

%
\section{Proof of Theorem \ref{thConsistencyCl}} \label{apConsistencyCl}

We proceed by induction in time $n$. For $n=0$, we have the constrained prior $\hat\pi_0^l(\sd x)=\frac{\Ind_{\mC_0^l}(x)\pi_0(\sd x)}{\int_{\mC_0^l} \pi_0(\sd x)}$, hence the approximation error can be written as
\beqa
\left| \pi_0(f) - \hat\pi_0^l(f)\right| &=& \left|
    \int_\mX f(x) \left(
        1 - \frac{\Ind_{\mC_0^l}(x)}{\int_{\mC_0^l} \pi_0(\sd x)}
    \right) \pi_0(\sd x)
\right| \nn \\
&=& \left|
    \int_{\mC_0^l} f(x) \frac{\int_{\mC_0^l} \pi_0(\sd x) - 1}{\int_{\mC_0^l} \pi_0(\sd x)} \pi_0(\sd x)
    + \int_{\overline{\mC_0^l}} f(x) \pi_0(\sd x)
\right|, \nn
\nn
\eeqa
where the second equality follows from splitting the integral over the partition $\mX=\mC_0^l \cup \overline{\mC_0^l}$ (we denote $\overline{\mC_0^l}=\mX\backslash\mC_0^l$) and the fact that $\hat\pi_0^l(\Ind_S)=0$ for any $S\subseteq\overline{\mC_0^l}$. Assumption \ref{assCl-1} together with the equation above yield
\beq
\left| \pi_0(f) - \hat\pi_0^l(f)\right| \le \|f\|_\infty\left(
    \frac{1-\varepsilon_0^l}{\varepsilon_0^l} + (1-\varepsilon_0^l)
\right) \le c_0 \frac{1-\varepsilon_0^l}{\varepsilon_0^l},
\label{eqBase0}
\eeq
where $c_0 = 2\|f\|_\infty$ (note that $\varepsilon_0^l \le 1$). Inequality \eqref{eqBase0} is the base case of the induction. 

Let us assume that, at time $n-1$, with $n\ge 1$, 
\beq
\left|
    \pi_{n-1}(f)-\hat\pi_{n-1}^l(f)
\right| \le \sum_{i=0}^{n-1} \tilde c_i \frac{1-\varepsilon_i^l}{\varepsilon_i^l}
\label{eqInductionHypothesis}
\eeq
for some constants $\tilde c_i < \infty$ and the sequences $\varepsilon_i^l$ in Assumption \ref{assCl-1}, with $0 \le i \le n-1$.

At time $n$, we first look into the predictive measures $\xi_n=K_n\pi_{n-1}$ and $\xi_n^l=\hat K_n^l \hat\pi_{n-1}^l$. A straightforward triangle inequality yields
\beq
\left|
    \xi_n(f)-\hat\xi_n^l(f)
\right| \le \left|
    K_n\pi_{n-1}(f) - K_n\hat\pi_{n-1}^l(f)
\right| + \left|
    K_n\hat\pi_{n-1}^l(f) - \hat K_n^l\hat\pi_{n-1}^l(f)
\right|.
\label{eqInduc-0}
\eeq
If we let $K_n(f)(x'):=\int_\mX f(x)K_n(x',\sd x)$ then the first term on the right-hand side of \eqref{eqInduc-0} can be bounded as
\beq
\left|
    K_n\pi_{n-1}(f) - K_n\hat\pi_{n-1}^l(f)
\right| = \left|
    \pi_{n-1}(K_n(f)) - \hat\pi_{n-1}^l(K_n(f))
\right| \le \| f \|_\infty \sum_{i=0}^{n-1} \tilde c_i \frac{1-\varepsilon_i^l}{\varepsilon_i^l},
\label{eqInduc-1}
\eeq
where the inequality follows from the induction hypothesis \eqref{eqInductionHypothesis}. For the second term on the right-hand side of \eqref{eqInduc-0} a straightforward manipulation (similar to the base case) leads us to 
\beqa
\left|
    K_n\hat\pi_{n-1}^l(f) - \hat K_n^l\hat\pi_{n-1}^l(f)
\right| &\le& \| f \|_\infty \left[
    \int_{\mC_n^l} \int_{\mC_{n-1}^l} K_n(x',\sd x) \frac{
        1-\int_{\mC_n^l} K_n(x',\sd x)
    }{
        \int_{\mC_n^l} K_n(x',\sd \bar x)
    }
    \hat \pi_{n-1}^l(\sd x')
\right. \nn\\
&&+ \left.
    \int_{\overline{\mC_n^l}} \int_{\mC_{n-1}^l} K_n(x',\sd x)\hat \pi_{n-1}^l(\sd x')
\right]
\label{eqInduc-2}
\eeqa
and the inequality \eqref{eqInduc-2} together with Assumption \ref{assCl-1} readily yields
\beq
\left|
    K_n\pi_n^l(f) - \hat K_n^l\hat\pi_{n-1}^l(f)
\right| \le \| f \|_\infty \left[
    \frac{1-\varepsilon_n^l}{\varepsilon_n^l} + 1-\varepsilon_n^l
\right] \le 2\|f\|_\infty \frac{1-\varepsilon_n^l}{\varepsilon_n^l},
\label{eqInduc-3}
\eeq
where we have used the fact that $\int_{\mC_n^l} \int_{\mC_{n-1}^l} K_n(x',\sd x)\hat\pi_{n-1}^l(\sd x') \le 1$ and $\int_{\overline{\mC_n^l}} \int_{\mC_{n-1}^l} K_n(x',\sd x)\hat \pi_{n-1}^l(\sd x') = 1 - \int_{\mC_n^l} \int_{\mC_{n-1}^l} K_n(x',\sd x)\hat \pi_{n-1}^l(\sd x')$. Taking together \eqref{eqInduc-1} and \eqref{eqInduc-3}, we arrive at
\beq
\left|
    \xi_n(f)-\hat\xi_n^l(f)
\right| \le \sum_{i=0}^n \bar c_i \frac{1-\varepsilon_i^l}{\varepsilon_i^l},
\label{eqInduc-4}
\eeq
where $\bar c_i = \| f \|_\infty \tilde c_i$, for $i<n$, and $\bar c_n = 2\| f \|_\infty$.

To complete the proof, we note that Bayes' rule yields
\beq
\pi_n(f)-\hat\pi_n^l(f) = \frac{
    \xi_n(g_n f)
}{
    \xi_n(g_n)
} - \frac{
    \hat\xi_n^l(g_n f)
}{
    \hat\xi_n^l(g_n)
} \pm \frac{
    \hat\xi_n^l(g_n f)
}{
    \xi_n(g_n)
} 
= \frac{
    \xi_n(g_n f) - \hat\xi_n^l(g_n f)
}{
    \xi_n(g_n)
} + \hat \pi_n^l(f) \frac{
    \hat\xi_n^l(g_n) - \xi_n(g_n)
}{
    \xi_n(g_n)
},
\label{eqBayesRule}
\eeq
which can be combined with the bound in \eqref{eqInduc-4} to obtain the inequality
\beq
\left|
    \pi_n(f)-\hat\pi_n^l(f)
\right| \le \sum_{i=0}^n c_i \frac{1-\varepsilon_i^l}{\varepsilon_i^l},
\text{~~where~~}
c_i = \frac{2\|f\|_\infty\|g_n\|_\infty \bar c_i}{\xi_n(g_n)}.
\nn
\eeq
We note that, by assumption, $g_n>0$ and $\|g_n\|_\infty$, hence $c_i<\infty$ for every $i=0, \ldots, n$. $\QED$

%
\section{Proof of Theorem \ref{thUniformDTV-1}} \label{apUniformDTV-1}

Let $\Phi_n$ be the PU operator associated to model $\mS$ and let $\Phi_{i:j}=\Phi_j \circ \cdots \circ \Phi_{i+1}$, with $\Phi_{i:i}(\pi) = \pi$ for any measure $\pi\in\mP(\mX)$. The approximation error can be written as
\beq
\hat\pi_n^l(f)-\pi_n(f) = \sum_{i=0}^{n-1} \Phi_{n-i:n}( \hat\pi_{n-i}^l )(f) - \Phi_{n-i:n}( \Phi_{n-i}(\hat \pi_{n-i-1}^l) )(f) + \Phi_{0:n}(\hat\pi_0^l)(f) - \Phi_{0:n}(\pi_0)(f),
\nn
\eeq
which readily yields the upper bounds\footnote{Note that for any probability measures $\alpha$ and $\beta$ one has $\dtv{\alpha-\beta} = \sup_{|f|\le 1} \frac{1}{2}|\alpha(f)-\beta(f)|$.}
\beqa
\sup_{|f|\le 1} \frac{1}{2}| \pi_n(f)-\hat\pi_n^l(f) | &\le&
\sum_{i=0}^{n-1} \dtv{ \Phi_{n-i:n}( \hat\pi_{n-i}^l ) - \Phi_{n-i:n}( \Phi_{n-i}(\hat \pi_{n-i-1}^l) ) }
\nn \\
&&+ \dtv{\Phi_{0:n}(\hat\pi_0^l) - \Phi_{0:n}(\pi_0)}
\nn \\
&\le& \sum_{i=0}^{n-1} r_i \dtv{ \hat\pi_{n-i}^l - \Phi_{n-i}(\hat \pi_{n-i-1}^l) } 
+ r_n\dtv{\pi_0-\hat\pi_0^l}
\nn\\
&=& \frac{1}{2} \left[
    \sum_{i=0}^{n-1} r_i \sup_{|f|\le 1} | \hat\pi_{n-i}^l(f) - \Phi_{n-i}(\hat \pi_{n-i-1}^l)(f) |
    + r_n |\pi_0(f) - \hat\pi_0^l(f) |
\right],
\label{eqAE-0}
\eeqa
where the second inequality follows from the stability Assumption \ref{assModel2}. 

For the last term on the right-hand side of \eqref{eqAE-0}, we obtain
\beq
\sup_{|f|\le 1} | \pi_0(f) - \hat\pi_0^l(f) | \le 2\frac{1-\varepsilon^l}{\varepsilon^l}
\label{eqAE-1}
\eeq
using Assumption \ref{assCl-2} and the same argument that leads to expression \eqref{eqBase0}. As for the $i$-th term in the sum of \eqref{eqAE-0}, applying the Bayes' rule in the same way as in Eq. \eqref{eqBayesRule} yields
\beqa
\sup_{|f|\le 1} | \hat\pi_{n-i}^l(f) - \Phi_{n-i}(\hat\pi_{n-i-1}^l)(f) | &=&
\sup_{|f|\le 1} \left|
    \frac{
        \hat K_{n-i}^l\hat\pi_{n-i-1}^l(g_{n-i}f) - K_{n-i}\hat\pi_{n-i-1}^l(g_{n-i}f)
    }{
        K_{n-i}\hat\pi_{n-i-1}^l(g_{n-i})
    } 
\right. \nn\\
&& \left.
    + \hat\pi_{n-i}^l(f)\frac{
        K_{n-i}\hat\pi_{n-i-1}^l(g_{n-i}) - \hat K_{n-i}^l\hat\pi_{n-i-1}^l(g_{n-i})
    }{
        K_{n-i}\hat\pi_{n-i-1}^l(g_{n-i})
    }
\right|
\nn \\
&\le& \frac{
    2 \int_{\mX} \int_{\mC_{n-i-1}^l} g_{n-i}(x)| K_{n-i}(x',\sd x) - \hat K_{n-i}^l(x',\sd x) | \hat\pi_{n-i-1}^l(\sd x')
}{
    K_{n-i}\hat\pi_{n-i-1}^l(g_{n-i})
}.
\label{eqAE-2}
\eeqa
In \eqref{eqAE-2}, the absolute difference between the kernels is
\beq
| K_{n-i}(x',\sd x) - \hat K_{n-i}^l(x',\sd x) | = K_{n-i}(x',\sd x) 
\quad \text{for} \quad \sd x \subset \overline{\mC_{n-i}^l},
\label{eqAE-3}
\eeq
and
\beq
| K_{n-i}(x',\sd x) - \hat K_{n-i}^l(x',\sd x) | = K_{n-1}(x',\sd x) \frac{
    1 - \int_{\mC_{n-i}^l} K_{n-i}(x',\sd \bar x)
}{
    \int_{\mC_{n-i}^l} K_{n-i}(x',\sd \bar x)
} \le K_{n-1}(x',\sd x) \frac{
    1 - \varepsilon^l
}{
    \varepsilon^l
}
\label{eqAE-4}
\eeq
for $\sd x \subset \mC_{n-i}^l$, with the inequality in \eqref{eqAE-3} following form Assumption \ref{assCl-2}. Hence, splitting the the first integral on the right-hand side of \eqref{eqAE-2} over the partition $\mX=\mC_{n-i}^l \cup \overline{\mC_{n-i}^l}$ and combining Eq. \eqref{eqAE-3} and inequality \eqref{eqAE-4} we arrive at
\beqa
\sup_{|f|\le 1} | \hat\pi_{n-i}^l(f) - \Phi_{n-i}(\hat\pi_{n-i-1}^l)(f) | &\le&
\frac{1-\varepsilon^l}{\varepsilon^l} \times  \frac{
    \int_{\mC_{n-i}^l} \int_{\mC_{n-i-1}^l} g_{n-i}(x)K_{n-1}(x',\sd x)\hat\pi_{n-i-1}^l(\sd x')
}{
    K_{n-i}\hat\pi_{n-i-1}^l(g_{n-i})
} \nn\\
&& + \frac{
    \int_{\overline{\mC_{n-i}^l}} \int_{\mC_{n-i-1}^l} g_{n-i}(x)K_{n-1}(x',\sd x)\hat\pi_{n-i-1}^l(\sd x')
}{
    K_{n-i}\hat\pi_{n-i-1}^l(g_{n-i})
} 
\nn\\
&=& \frac{1-\varepsilon^l}{\varepsilon^l} \times \frac{
    K_{n-i}\hat\pi_{n-i-1}^l(\Ind_{\mC_{n-i}^l} g_{n-i})
}{
    K_{n-i}\hat\pi_{n-i-1}^l(g_{n-i})
} + \frac{
    K_{n-i}\hat\pi_{n-i-1}^l(\Ind_{\overline{\mC_{n-i}^l}} g_{n-i})
}{
    K_{n-i}\hat\pi_{n-i-1}^l(g_{n-i})
}.
\label{eqAE-5}
\eeqa
However, by assumption, $\inf_i g_i \ge \zeta > 0$ and $\sup_i \| g_i \|_\infty \le \|g\|_\infty < \infty$, hence
\beq
\frac{
    K_{n-i}\hat\pi_{n-i-1}^l(\Ind_{\mC_{n-i}^l} g_{n-i})
}{
    K_{n-i}\hat\pi_{n-i-1}^l(g_{n-i})
} \le \frac{\|g\|_\infty}{\zeta} < \infty,
\label{eqAE-6}
\eeq
while
\beq
\frac{
    K_{n-i}\hat\pi_{n-i-1}^l(\Ind_{\overline{\mC_{n-i}^l}} g_{n-i})
}{
    K_{n-i}\hat\pi_{n-i-1}^l(g_{n-i})
} \le \frac{\|g\|_\infty}{\zeta} K_{n-i}\hat\pi_{n-i-1}^l(\Ind_{\overline{\mC_{n-i}^l}}) \le \frac{\|g\|_\infty}{\zeta} \times \frac{1-\varepsilon^l}{\varepsilon^l},
\label{eqAE-7}
\eeq
where the second inequality in \eqref{eqAE-7} follows from Assumption \ref{assCl-2} (and the fact that $\varepsilon^l\le 1$). 

Substituting \eqref{eqAE-6} and \eqref{eqAE-7} back into \eqref{eqAE-5} yields
\beq
\sup_{|f|\le 1} | \hat\pi_{n-i}^l(f) - \Phi_{n-i}(\hat\pi_{n-i-1}^l)(f) | \le
\frac{1-\varepsilon^l}{\varepsilon^l} \times \frac{2\|g\|_\infty}{\zeta}
\label{eqAE-8}
\eeq
and using the inequalities \eqref{eqAE-1} and \eqref{eqAE-8} in expression \eqref{eqAE-0} we obtain  
\beq
\sup_{|f|\le 1} | \pi_n(f)-\hat\pi_n^l(f) | \le 2 \frac{\|g\|_\infty}{\zeta} \times \frac{1-\varepsilon^l}{\varepsilon^l} \times \sum_{i=0}^n r_i.
\nn
\eeq
Since $\sum_{i=0}^\infty r_i < \infty$ by Assumption \ref{assModel2}, the inequality \eqref{eqThUnifDTV-1} in the statement of Theorem \ref{thUniformDTV-1} holds with constant $c = 2 \frac{\|g\|_\infty}{\zeta} \sum_{i=0}^\infty r_i < \infty$ independent of $n$ and $l$. $\QED$

%
\section{Proof of Theorem \ref{thUniformDTV-2}} \label{apUniformDTV-2}

First, we note that the subsets $\mC_n^l \subseteq \mX$ in the constraint $\mC^l$ have been defined to be compact. If every likelihood $g_n$ is continuous, with $\lim_{\|x\|\to\infty} g_n(x)=0$, then its superlevel sets are compact, and hence it is possible to choose the sets $\mC_n^l$ in the constraint $\mC^l$ as superlevel sets of $g_n$. 

We follow the same argument as in the proof of Theorem \ref{thUniformDTV-1} up to inequality \eqref{eqAE-5}. Then we note that 
\beq
\frac{
    K_{n-i}\hat\pi_{n-i-1}^l(\Ind_{\overline{\mC_{n-i}^l}} g_{n-i})
}{
    K_{n-i}\hat\pi_{n-i-1}^l(g_{n-i})
} \le \frac{
    \sup_{x\in\overline{\mC_{n-i}^l}} g_{n-i}(x)
}{
    \inf_{x\in\mC_{n-i}^l} g_{n-i}(x)
} \times \frac{
    K_{n-i}\hat\pi_{n-i-1}^l(\Ind_{\overline{\mC_{n-i}^l}})
}{
    K_{n-i}\hat\pi_{n-i-1}^l(\Ind_{\mC_{n-i}^l})
},
\label{eqAE2-0}
\eeq
and, moreover, 
\beq
\frac{
    \sup_{x\in\overline{\mC_{n-i}^l}} g_{n-i}(x)
}{
    \inf_{x\in\mC_{n-i}^l} g_{n-i}(x)
} \le 1 
\quad 
\text{and}
\quad
\frac{
    K_{n-i}\hat\pi_{n-i-1}^l(\Ind_{\overline{\mC_{n-i}^l}})
}{
    K_{n-i}\hat\pi_{n-i-1}^l(\Ind_{\mC_{n-i}^l})
} \le \frac{1-\varepsilon^l}{\varepsilon^l}
\label{eqAE2-1}
\eeq
because $\mC_{n-i}^l$ is a superlevel set of $g_{n-i}$ and Assumption \ref{assCl-2}, respectively. Substituting \eqref{eqAE2-1} into \eqref{eqAE2-0} and the result into \eqref{eqAE-5} yields
\beq
\sup_{|f|\le 1} | \hat\pi_{n-i}^l(f) - \Phi_{n-i}(\hat\pi_{n-i-1}^l)(f) | \le 2\frac{1-\varepsilon^l}{\varepsilon^l},
\label{eqAE2-2}
\eeq
where we have also used the fact that $K_{n-i}\hat\pi_{n-i-1}^l(\Ind_{\mC_{n-i}^l} g_{n-i}) \le K_{n-i}\hat\pi_{n-i-1}^l(g_{n-i})$. Combining \eqref{eqAE2-2}, \eqref{eqAE-1} and \eqref{eqAE-0} we arrive at
\beq
\sup_{|f|\le 1} | \pi_n(f) - \hat\pi_n^l(f) | \le
2 \frac{1-\varepsilon^l}{\varepsilon^l} \sum_{i=0}^n r_i.
\nn
\eeq
Since $\sum_{i=0}^\infty r_i<\infty$ by Assumption \ref{assModel2}, inequality \eqref{eqThUnifDTV-2} in the statement of Theorem \ref{thUniformDTV-2} holds with constant $c = 2\sum_{i=0}^\infty r_i<\infty$ independent of $n$ and $l$.

%
\section{Proof of Theorem \ref{thDTV-3}} \label{apDTV-3}

The argument is exactly the same as in the proof of Theorem \ref{thUniformDTV-2}, except that Assumption \ref{assCl-1} yields the time-dependent bounds
\beq
\sup_{|f|\le 1} |\pi_0(f)-\hat\pi_0^l(f)| \le 2\frac{1-\varepsilon_0^l}{\varepsilon_0^l}
\quad \text{and} \quad
\sup_{|f|\le 1} | \hat\pi_{n-i}^l(f) - \Phi_{n-i}(\hat\pi_{n-i-1}^l)(f) | \le 2\frac{1-\varepsilon_{n-i}^l}{\varepsilon_{n-i}^l}.
\label{eqAE3-0}
\eeq
Substituting \eqref{eqAE3-0} back into \eqref{eqAE-0} yields
$
\sup_{|f|\le 1} |\pi_n(f)-\hat\pi_n^l(f)| \le
2\sum_{i=0}^n \frac{1-\varepsilon_{n-i}^l}{\varepsilon_{n-i}^l} r_i. \quad \QED
$

%
\bibliographystyle{siamplain}
\bibliography{bibliografia}

@article{Akyildiz24,
  title={Sequential discretisation schemes for a class of stochastic differential equations and their application to {B}ayesian filtering},
  author={D. Akyildiz and D. Crisan and J. Miguez},
  journal={SIAM Journal on Numerical Analysis},
  volume={62},
  number={2},
  pages={946--973},
  year={2024}
}

@article{Ala16,
  title={An introduction to twisted particle filters and parameter estimation in non-linear state-space models},
  author={Ala-Luhtala, Juha and Whiteley, Nick and Heine, Kari and Pich{\'e}, Robert},
  journal={IEEE Transactions on Signal Processing},
  volume={64},
  number={18},
  pages={4875--4890},
  year={2016},
  publisher={IEEE}
}

@article{Amor17,
  title={On the convergence of constrained particle filters},
  author={Amor, Nesrine and Bouaynaya, Nidhal Carla and Shterenberg, Roman and Chebbi, Souad},
  journal={IEEE Signal Processing Letters},
  volume={24},
  number={6},
  pages={858--862},
  year={2017},
  publisher={IEEE}
}

@article{Arulampalam02b,
  title={A tutorial on particle filters for online nonlinear/non-Gaussian Bayesian tracking},
  author={M. S. Arulampalam and S. Maskell and N. Gordon and T. Clapp},
  journal={IEEE Transactions on signal processing},
  volume={50},
  number={2},
  pages={174--188},
  year={2002},
  publisher={Ieee}
}

@article{Bengtsson08,
	Author = {T. Bengtsson and P. Bickel and B. Li},
	Journal = {Probability and statistics: Essay in honour of David A. Freedman},
	Pages = {316--334},
	Title = {Curse of dimensionality revisited: Collapse of particle filter in very large scale systems},
	Volume = {2},
	Year = {2008}}

@article{Clark21,
  title={Control barrier functions for stochastic systems},
  author={Clark, Andrew},
  journal={Automatica},
  volume={130},
  pages={109688},
  year={2021},
  publisher={Elsevier}
}

@article{Crisan20,
  title={Stable approximation schemes for optimal filters},
  author={Crisan, Dan and L{\'o}pez-Yela, Alberto and Miguez, Joaquin},
  journal={SIAM/ASA Journal on Uncertainty Quantification},
  volume={8},
  number={1},
  pages={483--509},
  year={2020},
  publisher={SIAM}
}

@article{DelMoral01c,
	Author = {P. {Del Moral} and A. Guionnet},
	Journal = {Annales de l'Institut Henri Poincar{\'e} (B) Probability and Statistics},
	Number = {2},
	Organization = {Elsevier},
	Pages = {155--194},
	Title = {On the stability of interacting processes with applications to filtering and genetic algorithms},
	Volume = {37},
	Year = {2001}}

@article{Djuric03,
	Author = {P. M. Djuri\'c and J. H. Kotecha and J. Zhang and Y. Huang and T. Ghirmai and M. F. Bugallo and J. M\'{\i}guez},
	Journal = {IEEE Signal Processing Magazine},
	Month = {September},
	Number = {5},
	Pages = {19-38},
	Title = {Particle Filtering},
	Volume = {20},
	Year = {2003}}

@article{Douc09b,
  title={Optimality of the auxiliary particle filter},
  author={Douc, Randal and Moulines, Eric and Olsson, Jimmy},
  journal={Probability and Mathematical Statistics},
  volume={29},
  number={1},
  pages={1--28},
  year={2009}
}

@article{Doucet00,
	Author = {A. Doucet and S. Godsill and C. Andrieu},
	Journal = {Statistics and Computing},
	Number = {3},
	Pages = {197-208},
	Title = {On sequential {M}onte {C}arlo {S}ampling methods for {B}ayesian filtering},
	Volume = {10},
	Year = {2000}}

@article{Eklund25,
      title={Guided smoothing and control for diffusion processes}, 
      author={Oskar Eklund and Annika Lang and Moritz Schauer},
      year={2025},
      eprint={2503.04326},
      archivePrefix={arXiv},
      primaryClass={math.PR}
}

@article{Evensen03,
	Author = {G. Evensen},
	Journal = {Ocean dynamics},
	Number = {4},
	Pages = {343--367},
	Publisher = {Springer},
	Title = {The ensemble {K}alman filter: Theoretical formulation and practical implementation},
	Volume = {53},
	Year = {2003}}

@article{Fearnhead18,
  title={Particle filters and data assimilation},
  author={Fearnhead, Paul and K{\"u}nsch, Hans R},
  journal={Annual Review of Statistics and Its Application},
  volume={5},
  number={1},
  pages={421--449},
  year={2018},
  publisher={Annual Reviews}
}

@article{Frazier23,
  title={Variational {B}ayes in state space models: Inferential and predictive accuracy},
  author={Frazier, David T and Loaiza-Maya, Ruben and Martin, Gael M},
  journal={Journal of Computational and Graphical Statistics},
  volume={32},
  number={3},
  pages={793--804},
  year={2023},
  publisher={Taylor \& Francis}
}

@article{Garcia12,
  title={Truncated unscented {K}alman filtering},
  author={Garcia-Fernandez, Angel F and Morelande, Mark R and Grajal, Jes{\'u}s},
  journal={IEEE Transactions on Signal Processing},
  volume={60},
  number={7},
  pages={3372--3386},
  year={2012},
  publisher={IEEE}
}

@article{Georgiou13,
  title={The separation principle in stochastic control, redux},
  author={Georgiou, Tryphon T and Lindquist, Anders},
  journal={IEEE Transactions on Automatic Control},
  volume={58},
  number={10},
  pages={2481--2494},
  year={2013},
  publisher={IEEE}
}

@article{Gordon93,
	Author = {N. Gordon and D. Salmond and A. F. M. Smith},
	Journal = {IEE Proceedings-F},
	Number = {2},
	Pages = {107-113},
	Title = {Novel Approach to Nonlinear and Non-{G}aussian {B}ayesian State Estimation},
	Volume = {140},
	Year = {1993}}

@article{Grudzien20,
  title={On the numerical integration of the {Lorenz-96} model, with scalar additive noise, for benchmark twin experiments},
  author={Grudzien, Colin and Bocquet, Marc and Carrassi, Alberto},
  journal={Geoscientific Model Development},
  volume={13},
  number={4},
  pages={1903--1924},
  year={2020},
  publisher={Copernicus GmbH}
}

@article{Heine08,
	Author = {K. Heine and D. Crisan},
	Issn = {0001-8678},
	Journal = {Advances in Applied Probability},
	Number = {4},
	Pages = {979--1001},
	Publisher = {Applied Probability Trust, School of Mathematics and Statistics The University Sheffield S 3 7 RH UK},
	Title = {Uniform approximations of discrete-time filters},
	Volume = {40},
	Year = {2008}}

@article{Julier04,
	Author = {S. J. Julier and J. Uhlmann},
	Journal = {Proceedings of the IEEE},
	Month = {March},
	Number = {2},
	Pages = {401-422},
	Title = {Unscented filtering and nonlinear estimation},
	Volume = {92},
	Year = {2004}}

@article{Kalman60,
	Author = {R. E. Kalman},
	Journal = {Journal of Basic Engineering},
	Pages = {35-45},
	Title = {A New Approach to Linear Filtering and Prediction Problems},
	Volume = {82},
	Year = {1960}}

@article{Kalman61,
  title={New results in linear filtering and prediction theory},
  author={R. E. Kalman and R. S. Bucy},
  year={1961}
}

@article{Kitagawa96,
	Author = {G. Kitagawa},
	Journal = {J. Comput. Graph. Statist.},
	Pages = {1-25},
	Title = {Monte {C}arlo filter and smoother for non-{G}aussian nonlinear state-space models},
	Volume = {1},
	Year = {1996}}

@article{Kunsch05,
	Author = {H. R. K{\"u}nsch},
	Journal = {The Annals of Statistics},
	Number = {5},
	Pages = {1983-2021},
	Title = {Recursive {M}onte {C}arlo Filters: Algorithms and Theoretical Bounds},
	Volume = {33},
	Year = {2005}}

@article{Leeuwen09,
  title={Particle filtering in geophysical systems},
  author={van Leeuwen, Peter Jan},
  journal={Monthly Weather Review},
  volume={137},
  number={12},
  pages={4089--4114},
  year={2009}
}

@article{Margellos11,
  title={Hamilton--{J}acobi formulation for reach--avoid differential games},
  author={Margellos, Kostas and Lygeros, John},
  journal={IEEE Transactions on automatic control},
  volume={56},
  number={8},
  pages={1849--1861},
  year={2011},
  publisher={IEEE}
}

@article{Mitter02,
  title={Filtering and stochastic control: A historical perspective},
  author={Mitter, Sanjoy K},
  journal={IEEE Control Systems Magazine},
  volume={16},
  number={3},
  pages={67--76},
  year={2002},
  publisher={IEEE}
}

@article{Perez-Vieites18,
	title={Probabilistic scheme for joint parameter estimation and state prediction in complex dynamical systems},
  	author={P{\'e}rez-Vieites, Sara and Mari{\~n}o, In{\'e}s P and Miguez, Joaqu{\'\i}n},
  	journal={Physical Review E},
  	volume={98},
  	number={6},
  	pages={063305},
  	year={2018},
  	publisher={APS}
}

@article{Pieper-Sethmacher25,
      title={Simulation of infinite-dimensional diffusion bridges}, 
      author={Thorben Pieper-Sethmacher and Frank van der Meulen and Aad van der Vaart},
      year={2025},
      eprint={2503.13177},
      archivePrefix={arXiv},
      primaryClass={math.PR}
}

@article{Pieper-Sethmacher25b,
      title={Guided filtering and smoothing for infinite-dimensional diffusions}, 
      author={Thorben Pieper-Sethmacher and Daniele Avitabile and Frank van der Meulen},
      year={2025},
      eprint={2507.06786},
      archivePrefix={arXiv},
      primaryClass={math.PR}
}

@article{Pieper-Sethmacher25c,
title = {On a class of exponential changes of measure for stochastic {PDEs}},
journal = {Stochastic Processes and their Applications},
volume = {185},
pages = {104630},
year = {2025},
issn = {0304-4149},
author = {Thorben Pieper-Sethmacher and Frank {van der Meulen} and Aad {van der Vaart}},
keywords = {Doob’s h-transform, Exponential change of measure, Girsanov theorem, Guided process, Infinite-dimensional diffusion bridge, Kolmogorov operator, Pinned process, Semilinear SPDE, SPDE bridge},
abstract = {Given a mild solution X to a semilinear stochastic partial differential equation (SPDE), we consider an exponential change of measure based on its infinitesimal generator L, defined in the topology of bounded pointwise convergence. The changed measure Ph depends on the choice of a function h in the domain of L. In our main result, we derive conditions on h for which the change of measure is of Girsanov-type. The process X under Ph is then shown to be a mild solution to another SPDE with an extra additive drift-term. We illustrate how different choices of h impact the law of X under Ph in selected applications. These include the derivation of an infinite-dimensional diffusion bridge as well as the introduction of guided processes for SPDEs, generalizing results known for finite-dimensional diffusion processes to the infinite-dimensional case.}
}

@article{Smidl08,
  title={Variational bayesian filtering},
  author={Smidl, V{\'a}clav and Quinn, Anthony},
  journal={IEEE Transactions on Signal Processing},
  volume={56},
  number={10},
  pages={5020--5030},
  year={2008},
  publisher={IEEE}
}

@article{Snyder08,
  title={Obstacles to high-dimensional particle filtering},
  author={C. Snyder and T. Bengtsson and P. Bickel and J. Anderson},
  journal={Monthly Weather Review},
  volume={136},
  number={12},
  pages={4629--4640},
  year={2008}
}

@article{Whiteley14,
  title={Twisted particle filters},
  author={N. Whiteley and A. Lee},
  journal={The Annals of Statistics},
  volume={42},
  number={1},
  pages={115--141},
  year={2014}
}

@book{Anderson79,
	Author = {B. D. O. Anderson and J. B. Moore},
	Publisher = {Englewood Cliffs},
	Title = {Optimal Filtering},
	Year = {1979}}

@book{Bain08,
	Author = {A. Bain and D. Crisan},
	Publisher = {Springer},
	Title = {Fundamentals of Stochastic Filtering},
	Year = {2008}}

@book{Bensoussan92,
  title={Stochastic control of partially observable systems},
  author={Bensoussan, Alain},
  year={1992},
  publisher={Cambridge University Press}
}

@book{DelMoral04,
	Author = {P. {Del Moral}},
	Publisher = {Springer},
	Title = {{F}eynman-{K}ac Formulae: Genealogical and Interacting Particle Systems with Applications},
	Year = {2004}}

@book{Law15,
  title={Data Assimilation},
  author={K. Law and A. Stuart and K. Zygalakis},
  year={2015},
  publisher={Springer}
}

@book{Reich15,
  title={Probabilistic forecasting and {B}ayesian data assimilation},
  author={Reich, Sebastian and Cotter, Colin},
  year={2015},
  publisher={Cambridge University Press}
}

@book{Ristic04,
	Address = {Boston},
	Author = {B. Ristic and S. Arulampalam and N. Gordon},
	Publisher = {Artech House},
	Title = {Beyond the {K}alman Filter: Particle Filters for Tracking Applications},
	Year = {2004}}

@book{Rogers00,
  title={Diffusions, Markov processes, and martingales},
  author={L.C.G. Rogers and D. Williams},
  volume={2},
  year={2000},
  publisher={Cambridge university press}
}

@book{Sarkka13,
  title={Bayesian filtering and smoothing},
  author={S{\"a}rkk{\"a}, Simo},
  number={3},
  year={2013},
  publisher={Cambridge university press}
}

@book{Stone13,
  title={Bayesian multiple target tracking},
  author={L. D. Stone and R. L. Streit and T. L. Corwin and K. L. Bell},
  year={2013},
  publisher={Artech House}
}

@incollection{Pitt01,
	Author = {M. K. Pitt and N. Shephard},
	Booktitle = {Sequential {M}onte {C}arlo Methods in Practice},
	Chapter = {13},
	Editor = {A. Doucet and N. de Freitas and N. Gordon},
	Pages = {273-293},
	Publisher = {Springer},
	Title = {Auxiliary variable based particle filters},
	Year = {2001}}

@inproceedings{Branchini21,
  title={Optimized auxiliary particle filters: adapting mixture proposals via convex optimization},
  author={Branchini, Nicola and Elvira, V{\'\i}ctor},
  booktitle={Uncertainty in Artificial Intelligence},
  pages={1289--1299},
  year={2021},
  organization={PMLR}
}

@inproceedings{Chopin23,
  title={Computational {D}oob h-transforms for online filtering of discretely observed diffusions},
  author={Chopin, Nicolas and Fulop, Andras and Heng, Jeremy and Thiery, Alexandre H},
  booktitle={International Conference on Machine Learning},
  pages={5904--5923},
  year={2023},
  organization={PMLR}
}

\end{document}